\patchcmd{\section}{\scshape}{\bfseries}{}{}
\renewcommand{\@secnumfont}{\bfseries}
\patchcmd{\section}{\normalfont}{\normalfont\color{MidnightBlue}}{}{}
\patchcmd{\subsection}{\normalfont}{\normalfont\color{MidnightBlue}}{}{}
\def\subsubsection{\@startsection{subsubsection}{3}%
\z@{.5\linespacing\@plus.7\linespacing}{-.5em}%
{\normalfont\bfseries}}
\newlength{\fixboxwidth}
\renewcommand{\algorithmiccomment}[1]{\bgroup\hfill//~#1\egroup}
\numberwithin{equation}{section}
\def\L{\mathcal{L}}
\def\<{\big\langle}
\def\>{\big\rangle}
\def\argmin{{\operatorname{argmin}}}
\definecolor{red}{rgb}{0.9, 0, 0}
\newcommand\blfootnote[1]{%
  \begingroup
  \renewcommand\thefootnote{}\footnote{#1}%
  \addtocounter{footnote}{-1}%
  \endgroup
}
\newtheorem{theorem}{Theorem}
\newtheorem{proposition}{Proposition}
\newtheorem{corollary}{Corollary}
\newtheorem{lemma}{Lemma}
\newtheorem{problem}{Problem}
\theoremstyle{definition}
\newtheorem{definition}{Definition}
\newtheorem{example}{Example}
\theoremstyle{remark}
\newtheorem{remark}{Remark}
\newtheorem{discussion}{Discussion}
\begin{document}

\title[Aggregation of Pareto optimal models]{Aggregation of Pareto optimal models}
\author{Hamed Hamze Bajgiran, Houman Owhadi} 

\date{\today}
\blfootnote{Affiliation: Division of Computing and Mathematical Sciences (CMS), California Institute of Technology. Email: \href{mailto:hhamzeyi@caltech.edu}{hhamzeyi@caltech.edu} and \href{mailto:owshadi@caltech.edu}{owhadi@caltech.edu} }

\maketitle

\begin{abstract}
Pareto efficiency is a concept commonly used in economics, statistics, and engineering. In the setting of statistical decision theory, a model is said to be Pareto efficient/optimal (or admissible) if no other model carries less risk for at least one state of nature while presenting no more risk for others. 
How can you rationally aggregate/combine a finite set of Pareto optimal models while preserving Pareto efficiency?
This question is nontrivial because weighted model averaging does not, in general, preserve Pareto efficiency. This paper presents an answer in four logical steps: (1) A rational aggregation rule should preserve Pareto efficiency (2) Due to the complete class theorem, Pareto optimal models must be Bayesian, i.e., they minimize a risk where the true state of nature is averaged with respect to some prior. Therefore each Pareto optimal model can be associated with a prior, and Pareto efficiency can be maintained by aggregating Pareto optimal models through their priors. (3)  A prior can be interpreted as a preference ranking over models: prior $\pi$ prefers model A over model B if the average risk of A is lower than the average risk of B (where the average is taken with respect to the prior $\pi$). (4) 
A rational/consistent aggregation rule should preserve this preference ranking: If both priors $\pi$ and $\pi'$ prefer model A over model B, then the prior obtained by aggregating $\pi$ and $\pi'$ must also prefer A over B. 
Under these four logical steps, we show that all rational/consistent aggregation rules are as follows: Give each individual Pareto optimal model a weight, introduce a weak order/ranking over the set of Pareto optimal models,  aggregate a finite set of models S as
the model associated with the prior obtained as the weighted average of the priors of the highest-ranked models in S. This result shows that all rational/consistent aggregation rules must follow a generalization of hierarchical Bayesian modeling.
 Following our main result, we present applications to Kernel smoothing, time-depreciating models, social Choice theory, and voting mechanisms.
\end{abstract}

\section{Introduction}

The purpose of this paper is to characterize rational/consistent aggregation rules for Pareto efficient/optimal (admissible) models, i.e., answer the following question:   how can a decision-maker consistently aggregate the opinions of different experts or different Pareto efficient models into one single/aggregate Pareto efficient model?


For example, the decision-maker can be a financial planner who can use different models or expert opinions to create a portfolio of assets to maximize the expected profit of her portfolio. Given access to a set of  different experts or Pareto efficient models,
she must design a plan/rule on how to aggregate the different models/opinions to form a single final Pareto efficient model/opinion. 

More generally, employing Wald's Decision theoretic setting, the decision-maker may be interested in estimating some quantity of interest depending on some unknown parameter given data sampled from distribution depending on that parameter. In the process of selecting a decision rule (a plan/rule on how to use the observed data to estimate the quantity of the interest), the decision-maker observes the opinions or characteristics of some experts  (which, from now on, we simply refer to as experts). The decision maker's goal is then to aggregate all those experts into a single Pareto efficient model to form the final decision rule.

Our goal is to show, for all these examples,
that the aggregation plan/rule and the final Pareto efficient model have a simple form under the following consistency/rationality requirements (derived from   \cite{hamzeowhadi1}).

\begin{enumerate}
    \item Regardless of the set of observed experts, the decision-maker plays optimally. That is, she never plays a rule that, regardless of the true underlying parameter, leads to a higher loss than another rule.
    \item As a consequence of the complete class theorem, the decision-maker should find a minimizer of the loss function with respect to a single prior.
    \item By enabling comparisons between decisions rules/models through their average loss, a prior 
    can be interpreted as a preference ranking over the set of all decision rules/models. In this interpretation, the decision-maker should find the highest-ranked decision rule (which carries the lowest average risk). 
    \item If the decision-maker interprets a prior as a ranking over decision rules in the risk set, she should consistently aggregate the observed  experts. The form of consistency we use is the one introduced in \cite{hamzeowhadi1} and also has been mentioned with different names and purposes in the literature on case-based decision theory and social choice theory. That is, if the decision-maker observes a set of experts $A$ and another disjoint set of experts $B$ and form their respective priors $f(A)$ and $f(B)$. Then, the aggregated ranking induced by $f(A\cup B)$ over the set of decision rules (risk set) should preserve the ranking of $f(A)$ and $f(B)$. That is, for every two decision rules $d_1$ and $d_2$, if both $f(A)$ and $f(B)$ prefer $d_1$ over $d_2$, then  $f(A\cup B)$ should also prefer $d_1$ over $d_2$. 
    
\end{enumerate}

Our main result is to show that 
the consistency/rationality requirements described above can only be satisfied by the following simple  aggregation rule (which can be interpreted as a generalization of \emph{Hierarchical Bayes}).

\begin{enumerate}

    \item Select a weight function and a weak ordering over  experts.
    \item Identify the prior associated with each expert.
    \item For every subset of observed experts, find the average prior (by averaging the priors of highest ranked individual experts in the subset with respect to the weight).
    \item Finally, select a minimizer of the loss function with respect to the obtained average prior of step 3.

\end{enumerate}

Note that the weight used to form the average prior is independent of the subset of observed experts.

The organization of the paper is as follows. In section \ref{sec_main}, we present the decision-theoretic setting used to formalize our results. Then, we articulate the four main logical steps leading to our main results. Section \ref{sec_examles}  provides examples of the representation of the set of experts with applications and connections to the literature in statistics and social choice theory.

\section{Main Model and Results}\label{sec_main}

Let $\mathcal{E}$ be a set of experts. Depending on the application, we may assume that $\mathcal{E}$ to be a subset of a linear vector space. In that case, each expert $e \in \mathcal{E}$ may have different characteristics encoded in the coordinates of the vector $e$.
The goal of the decision-maker is to identify a (modeling) rule for aggregating experts by mapping the set of finite subsets of the set $\mathcal{E}$, which we denote by $\mathcal{E}^*$, to a set of models or decision rules $\mathcal{M}$.

\begin{definition}\label{def_AR}
Let $\mathcal{E}$ be a set of experts and $
\mathcal{M}$ be a set of models.  A \textbf{\emph{modeling rule}} on $\mathcal{E}$ is a function $f:\mathcal{E}^* \to \mathcal{M}$, that maps any finite subset of experts $A\in \mathcal{E}^*$ to a model $f(A)\in \mathcal{M}$.
\end{definition}

We will now use Wald's decision-theoretic setting to describe  $\mathcal{M}$.

\subsection{Identification of \texorpdfstring{$\mathcal{M}$}{Lg} in Wald's decision theoretic setting}

Let $(\mathcal{X}, \Sigma)$ be a measurable outcome space and $(\Theta, \Sigma_\Theta)$ be a measurable space of the possible states of nature, with $\mathcal{P}(\Theta)$ being the set of probability distributions on $(\Theta, \Sigma_\Theta)$.  Moreover, there is a class of probability measures $\{P_\theta: \theta\in \Theta\}$ such that whenever the true state is $\theta\in \Theta$, the distribution of observations $X\in\mathcal{X}$ is according to $P_\theta$. In other words, $(\mathcal{X}, \Sigma, P_\theta)$ is a probability space for every $\theta \in \Theta$.


 Wald's decision-theoretic setting is concerned with
 the problem of estimating some \textbf{\emph{quantity of the interest}} $q(\theta)$ in a space $Q$ given the observation of samples from the outcome space whose distribution depends on the true state of nature $\theta\in \Theta$ ($q:\Theta\to Q$). 
Let  $l:Q\bigtimes Q  \to\mathbb{R}$ be a \textbf{\emph{loss function}}  such that  $ l(x,y)\geq 0$ for all $x,y\in Q$, and  $ l(x,y)= 0$ whenever $x=y$.

\begin{definition} A \textbf{\emph{randomized model}} or a \textbf{\emph{randomized decision rule}} is a function $d:\mathcal{X}\times [0,1]\to Q$ such that $ l(q(\theta), d)$ is a measurable function on the measurable space $\large(\mathcal{X}\times [0,1],\sigma( \Sigma\times \mathcal{B}[0,1]
)\large)$ for every $\theta\in \Theta$, where $\mathcal{B}$ represents the Borel $\sigma$-algebra.  We denote the set of all randomized decision rules by $\Delta(D)$.

\end{definition}

For every randomized decision rule $d$, the decision-maker first $u$ according to the uniform distribution on $[0,1]$ and then estimates $q(\theta)$ according to the non-randomized decision rule $d(\cdot,u):\mathcal{X}\to Q$.

\begin{definition}
The \textbf{\emph{risk function}} $R_q: \Theta\bigtimes \Delta(D)\to \mathbb{R}$ is defined as the expected loss given the state $\theta$ and the decision rule $d$:

\begin{equation}
    R_q(\theta,d)=E_{X\sim P_\theta, u\sim U[0,1]}[l(q(\theta),d(X,u))].
\end{equation}

\end{definition}

From now on, we set $\mathcal{M}=\Delta(D)$: we  assume that the decision-maker  selects a randomized decision rule, i.e., we consider the modeling rule $f:\mathcal{E}^* \to \Delta(D)$.

We will now investigate four logical steps (rationality conditions) in the process of identifying a rule $f$. We will show that if all four steps are satisfied, then the modeling rule $f$  must be a simple weighted average.

\subsection{Step 1; Optimality/Admissibility}\label{step1}

The goal of a decision-maker is to select a decision rule minimizing some risk function. As in fig~\ref{fig_risk_shape}, regardless of the procedure employed to select a decision rule, a rational decision-maker should always select a rule that cannot be worst than another rule for all states of nature.  Otherwise, there is another estimator which provides less risk for at least one state of nature, and no more risk for others.

\begin{figure}[h]
    \centering
    \includegraphics[width=0.5\textwidth]{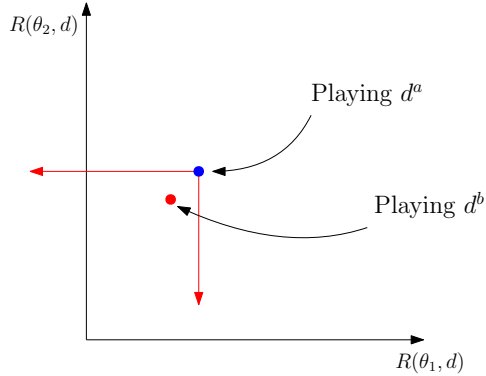}
    
    \caption{Playing $d^b$ is always better than playing $d^a$.}
    \label{fig_risk_shape}
\end{figure}

\begin{definition}
A decision rule $d_1\in \Delta(D)$ is as good as a decision rule $d_2\in \Delta(D)$ if $R_q(\theta,d_1)\leq R_q (\theta,d_2)$, $\theta\in \Theta$. A decision rule $d_2$ is \textbf{\emph{Pareto dominated}} by $d_1$ if $d_1$ is as good as $d_2$, and there exists at least a state $\theta$ such that $R_q(\theta,d_1)<R_q(\theta,d_2)$. 
An \textbf{\emph{admissible rule}} is a decision rule that is not Pareto dominated. A class of estimators $C\subset \Delta(D)$ is said to be \textbf{\emph{complete}} if it contains all admissible decision rules in $\Delta(D)$.

\end{definition}

The idea is that if the goal is to minimize the risk, the decision-maker should only use the models in a complete class. Therefore, we assume that the range of a good modeling rule is a subset of admissible (not Pareto dominated) decision rules.

\begin{definition} \label{assumption_1}
A modeling rule $f : \mathcal{E}^* \to \Delta(D)$ is \textbf{\emph{admissible}} if the range of $f$ is subset of the set of admissible (not Pareto dominated) randomized decision rules.
\end{definition}

\subsection{Step 2; Complete Class Theorem}\label{step2}
Admissible rules are related to the class of Bayes decision rules. To explore this relation, note that since the true state of nature $\theta$ is unknown, one may average the risk  with respect to a distribution of possible states of nature $\Theta$. The following definition captures this idea.

\begin{definition}
 The \textbf{\emph{Bayes risk function}} $R_q: \mathcal{P}(\Theta)\bigtimes \Delta(D)\to \mathbb{R}$ is the expectation of the risk function with respect to a \textbf{\emph{prior distribution}} $\pi\in \mathcal{P}(\Theta)$ and a randomized decision rule $d\in \Delta(D)$:
\begin{equation}
    R_q(\pi,d)=E_{\theta\sim\pi}[ R_q(\theta,d)],
\end{equation}
where, for ease of presentation, we have overloaded the notation of $R_q$.
\end{definition}

\begin{remark}\label{rem_linearity}
The Bayes risk function is a multi-linear function in the following sense. If the prior $\pi$ is a convex combination of two priors $\pi_1,\pi_2$, i.e $\pi=\alpha \pi_1+ (1-\alpha)\pi_2 $, then $R_q(\pi,d)=\alpha R_q(\pi_1,d)+(1-\alpha)R_q(\pi_2,d)$ for all $d\in \Delta(D)$. Moreover, if the distribution of a randomized decision rule $d$ is the same as the distribution of the randomization of two rules $d_1$ and $d_2$ which are selected with probability $\alpha$ and $1-\alpha$, then $R_q(\pi,d)=\alpha R_q(\pi,d_1)+(1-\alpha)R_q(\pi,d_2)$. 

One way of defining a randomized decision rule $d$ to have the distribution of the randomization of two randomized decision rules $d_1,d_2\in \Delta(D)$ with probability $\alpha$ and $1-\alpha$, is by defining

\begin{equation}
   d(x,u) =
    \begin{cases}
      d_1(x,\frac{u}{\alpha}) & \text{if } u< \alpha,\\
      d_2(x,\frac{u-\alpha}{1-\alpha}) & \text{otherwise}.
    \end{cases}
\end{equation}
\end{remark}

Bayes decision rules are the minimizer of the Bayes risk functions.

\begin{definition}\label{def:Sec_Bayes}
Let $\pi$ be a prior on $\Theta$. A \textbf{\emph{Bayes decision rule}} for the prior $\pi$ is a decision rule $d_\pi\in \Delta(D)$ that minimizes the Bayes risk function, i.e,  $d_\pi\in \operatorname{argmin}_{d\in \Delta(D)}R_q(\pi,d)$. 

\end{definition}

\cite{Wald:essentially} shows that in many cases, the class of Bayes decision rules forms a complete class. In other words, every admissible decision rule should minimize the loss function for a prior. Since, the result and the geometrical understanding of the result is important for the rest of the paper, we provide a geometrical overview and simplified proof of the result. To that end, it is helpful to consider the geometry of the \textbf{\emph{risk set}} $S\subseteq \mathbb{R}^\Theta$, which we endow with a topology later, defined as

\begin{equation}
    S=\{s\in \mathbb{R}^\Theta\ |\  \exists\  d\in \Delta(D)\text{ s.t } s(\theta)=R_q(\theta,d) \text{ for all } \theta\in \Theta \}.
\end{equation}

Essentially, for every \textbf{\emph{risk profile}} $s\in S$ there exists a randomized decision rule $d\in \Delta(D)$ such that the risk of playing the rule $d$ is exactly $s(\theta)$ for every state $\theta$. In other words, the risk set captures all possible attainable risk profiles.

By the definition of the risk set

\begin{equation}\label{eq:risk_eq}
\inf_{d\in \Delta(D)} R(\pi,d)=\inf_{s\in S} \int_{\Theta}s(\theta)\,d\pi(\theta).
\end{equation}
Informally, the complete class theorem is supported by a simple geometric argument. Minimizing the risk function defined by a prior $\pi$ over the set of randomized decision rules is the same as minimization of the linear function $\int_{\Theta}s(\theta)\,d\pi(\theta)$, defined by $\pi$, over the risk set $S$. By Remark~\ref{rem_linearity}, $S$ is a convex set. Therefore, the minimizer is on the intersection of the hyperplane defined by the prior $\pi$ and the boundary of $S$. As in fig~\ref{fig_optimal}, we will show that the other direction works as well. That is, we show that if the risk set is closed, then risk profiles associated with admissible decision rules are on the lower boundary of the risk set. Moreover, for every point on the lower boundary of the risk set $S$, there exists a tangent hyperplane defined by a prior $\pi$ to the risk set at that point. We show that the decision rule associated with that point on the boundary is a Bayes decision rule with respect to $\pi$.

\begin{figure}[ht]
    \centering
    \includegraphics[width=0.7\textwidth]{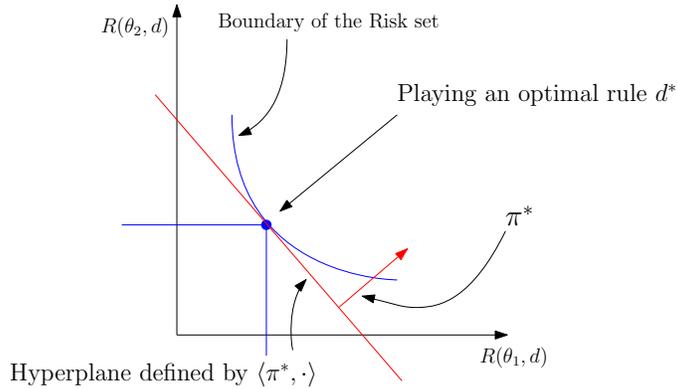}
    
    \caption{For every admissible decision rule $d^*$ on the boundary of the risk set, one can find a prior $\pi^*$ such that the hyperplane defined by $\langle \pi^*, \cdot\rangle $ that passes through $d^*$ separates the risk set and the set of negative functions.}
    \label{fig_optimal}
\end{figure}

To formalize this, we need the following definitions. Given a function $r\in \mathbb{R}^\Theta$, define the negative quadrant at $r$ to be 
\[Q_r=\{f\in \mathbb{R}^\Theta| f(\theta)\leq r(\theta) \text{ for all } \theta\in \Theta \}.\]
We define the lower boundary $\L(S)$ of $S$ by
$$\L(S)=\{r\in \mathbb{R}^\Theta| Q_r\cap \Bar{S}=\{r\}\} ,$$
where $\Bar{S}$ is the closure of the set $S$.
The set $S$ is said to be \emph{\textbf{closed from below }}if $\L(S)\subseteq S$. 

The main connection between a prior in the minimization of the risk function and a tangent hyperplane to the risk set is through the Riesz–Markov–Kakutani representation theorem (see \cite{Aliprantis2006} chapter 13). 

\begin{theorem}
Let X be a compact Hausdorff space and let $C(X)$ denote the set of continuous functions on $X$ equipped with $\sup$-norm. For any continuous linear function $\psi$ on $C(X)$, there is a unique signed Borel measure $\mu$ on $X$ such that
$$ \psi (f)=\int _{X}f(x)\,d\mu (x),     \qquad \forall f\in C(X).$$
The norm of $\psi$ as a linear function is the total variation of $\mu$, that is
$ \|\psi \|=|\mu |(X)$.
Finally, $\psi$ is positive \big($\psi(f)\geq 0$ for every non-negative function $f\in C(X)$\big) if and only if the measure $\mu$ is non-negative.

\end{theorem}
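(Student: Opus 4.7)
The plan is to reduce the representation problem for a general continuous linear functional to the classical case of a positive one. Given $\psi$ on $C(X)$, I would first define, for non-negative $f\in C(X)$, the quantity $\psi^{+}(f) = \sup\{\psi(g) : g\in C(X),\ 0\leq g\leq f\}$. Continuity of $\psi$ makes this supremum finite, and one checks additivity on the positive cone by pairing any $g_1\leq f_1$, $g_2\leq f_2$ with $g_1+g_2\leq f_1+f_2$ and conversely splitting any $g\leq f_1+f_2$ as $g=\min(g,f_1)+(g-\min(g,f_1))$. Together with positive homogeneity, this lets $\psi^{+}$ extend uniquely to a positive linear functional on $C(X)$ via $\psi^{+}(f)=\psi^{+}(f^{+})-\psi^{+}(f^{-})$. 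Setting $\psi^{-} = \psi^{+}-\psi$ yields a Jordan-style decomposition $\psi = \psi^{+} - \psi^{-}$ into two positive functionals; representing each as a positive finite measure $\mu^{\pm}$ and taking $\mu=\mu^{+}-\mu^{-}$ gives the required signed Borel measure. The norm identity $\|\psi\|=|\mu|(X)$ then falls out of $|\mu|\leq\mu^{+}+\mu^{-}$ together with a Hahn-decomposition argument showing equality.

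For a positive linear functional $\Lambda$ on $C(X)$, the plan is to build the measure of an open set $U$ as $\mu(U)=\sup\{\Lambda(f): 0\leq f\leq 1,\ \operatorname{supp}(f)\subseteq U\}$, and then extend to arbitrary sets by outer regularity $\mu^{*}(E)=\inf\{\mu(U):U\supseteq E,\ U\text{ open}\}$. The standard Carathéodory machinery then produces a $\sigma$-algebra of measurable sets on which $\mu^{*}$ is countably additive; the key input for forcing this $\sigma$-algebra to contain the Borel sets is Urysohn's lemma, which is available because compact Hausdorff spaces are normal and supplies continuous bump functions adapted to any pair of disjoint closed sets. To verify the representation $\Lambda(f)=\int f\,d\mu$ for $f\geq 0$, I would partition $[0,\|f\|_{\infty}]$ into strips of width $1/n$, set $f_k=\min(1,\max(0,nf-k))$, and exploit the identity $f=\tfrac{1}{n}\sum_{k\geq 0}f_k$ together with squeeze estimates relating $\Lambda(f_k)$ to the measures of the level sets $\{f\geq (k+1)/n\}\subseteq\operatorname{supp}(f_k)\subseteq\{f\geq k/n\}$. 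Sending $n\to\infty$ closes the gap.

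Uniqueness follows because continuous functions separate points and, via Urysohn, can approximate indicators of compact sets in $L^{1}(\mu)$ norm: two signed Borel measures producing the same $\Lambda$ must therefore agree on compact sets and hence, by regularity, on all Borel sets. I expect the main technical obstacle to sit in the representation step: showing not only that $\mu^{*}$ restricted to the Borel $\sigma$-algebra is a genuine Borel measure, but also that it is sufficiently regular (inner regular on open sets, outer regular on Borel sets) for the strip-decomposition squeeze to yield equality rather than a one-sided bound. This hinges delicately on the tightness afforded by compactness of $X$, and any careless estimate in the strip argument collapses the identity to an inequality. The other subtlety is verifying that the positivity half of the final claim is a genuine equivalence, which amounts to noting that $\psi$ positive forces $\psi^{-}\equiv 0$, so $\mu=\mu^{+}\geq 0$.
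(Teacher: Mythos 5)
This statement is the Riesz--Markov--Kakutani representation theorem, which the paper quotes as a known classical result (with a pointer to \cite{Aliprantis2006}, Chapter 13) and does not prove, so there is no internal proof to compare your argument against. Your outline is the standard textbook proof: a Jordan-type decomposition $\psi=\psi^{+}-\psi^{-}$ of the functional into positive parts, followed by the Carath\'eodory/Urysohn construction of a regular Borel measure representing each positive part, the strip (``staircase'') approximation for the integral identity, and the observation that positivity of $\psi$ forces $\psi^{-}\equiv 0$. The individual steps you list are all correct (the splitting $g=\min(g,f_1)+(g-\min(g,f_1))$ does give additivity of $\psi^{+}$ on the positive cone, and $\tfrac{1}{n}\sum_k f_k=f$ does hold). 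Two points deserve care if you were to write this out in full. First, the uniqueness and the norm identity $\|\psi\|=|\mu|(X)$ are valid only within the class of \emph{regular} Borel measures: on a general (non-metrizable) compact Hausdorff space a Borel measure representing $\psi$ need not be unique without that proviso, and your own uniqueness argument tacitly invokes regularity, so the theorem as stated should be read with ``regular'' understood (as it is in the cited reference). Second, for the norm identity there is a shortcut that avoids Hahn decomposition and Lusin-type approximation: one checks directly that $\psi^{+}(1)+\psi^{-}(1)=\sup\{\psi(h):-1\leq h\leq 1\}\leq\|\psi\|$, which combined with $|\mu|(X)\leq\mu^{+}(X)+\mu^{-}(X)$ and the trivial bound $\|\psi\|\leq|\mu|(X)$ closes the loop. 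With those caveats your plan is sound and fills in exactly what the paper leaves to the reference.
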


We are now ready to establish a complete class theorem; under some conditions, the set of Bayes decision rules contains the set of admissible rules. There are three main geometrical components, the convexity of the risk set, application of the separating hyperplane theorem to form a tangent hyperplane at any lower boundary of the risk set, and application of the Riesz–Markov–Kakutani representation theorem to obtain a representation of the tangent hyperplane in the form of an integral of the risk function with respect to a prior. 

Let $C(\Theta)$ be the space of continuous function on $\Theta$ equipped with the sup norm. In the following theorem, we assume that risk functions are continuous in their first argument and therefore $S\subset C(\Theta)$. Hence, we endow the risk set with the topology of $C(\Theta)$.

\begin{theorem} (Complete Class Theorem)\label{thm_main theorem_complete class }
Let $\Theta$ be a compact subset of a Hausdorff topological space.  If for every decision rule $d\in \Delta(D)$ the risk function $R_q(\theta,d)$ is a continuous function of $\theta$, and the risk set is closed from below in $C(\Theta)$, then the Bayes decision rules form an essentially complete class. 
\end{theorem}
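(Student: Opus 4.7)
The plan is to prove the theorem in three stages via a separating hyperplane argument in $C(\Theta)$ combined with the Riesz--Markov--Kakutani representation theorem: (i) reduce essential completeness to showing that every risk profile in $\L(S)$ is attained by a Bayes rule; (ii) produce a separating continuous linear functional at any such profile; (iii) identify that functional with a prior and conclude.

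For stage (i), Remark~\ref{rem_linearity} yields convexity of $S$ (mixing two rules produces a rule whose risk is the convex combination), and continuity of $R_q(\cdot,d)$ places $S \subseteq C(\Theta)$. For any $d\in \Delta(D)$ with risk $r_d$, I claim $Q_{r_d}\cap \L(S)\neq\emptyset$: non-negativity of the loss bounds $Q_{r_d}\cap \bar S$ from below by $0$, and a Zorn's-lemma/compactness argument extracts a pointwise-minimal element, which by construction lies in $\L(S)\subseteq S$ thanks to the closed-from-below hypothesis. The rule attaining that element weakly dominates $d$, so it suffices to show every $r^*\in \L(S)$ is the risk profile of some Bayes rule.

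For stage (ii), fix $r^*\in \L(S)$ and consider the open convex set $U=\{f\in C(\Theta): f(\theta)<r^*(\theta)\text{ for all }\theta\in \Theta\}$. If some $s\in S$ lay in $U$, then $s\in Q_{r^*}\cap \bar S$ with $s\neq r^*$, contradicting $r^*\in \L(S)$; hence $U\cap S=\emptyset$. The Hahn--Banach separation theorem in the locally convex space $C(\Theta)$ yields a nonzero continuous linear functional $\psi$ and a constant $c$ with $\psi(f)\leq c\leq \psi(s)$ for every $f\in U$, $s\in S$. Approximating $r^*$ from below by $r^*-n^{-1}\mathbf{1}\in U$ and invoking continuity of $\psi$ shows $\psi(r^*)=c$, so the separating hyperplane passes through $r^*$.

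For stage (iii), Riesz--Markov--Kakutani represents $\psi(f)=\int_\Theta f\,d\mu$ for some signed Borel measure $\mu$. To promote $\mu$ to a non-negative measure, take any non-negative $g\in C(\Theta)$ and any $\alpha>0$: since $r^*-g-\alpha\mathbf{1}\in U$, the separation gives $\psi(r^*)-\psi(g)-\alpha\psi(\mathbf{1})\leq \psi(r^*)$, i.e., $\psi(g)\geq -\alpha\psi(\mathbf{1})$; letting $\alpha\downarrow 0$ yields $\psi(g)\geq 0$ and hence $\mu\geq 0$. Normalizing produces a prior $\pi=\mu/\mu(\Theta)$, and the separation inequality $\int r^*\,d\pi\leq \int s\,d\pi$ for all $s\in S$ combined with \eqref{eq:risk_eq} shows that the rule attaining $r^*$ minimizes the Bayes risk against $\pi$, hence is Bayes for $\pi$. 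The main obstacle I anticipate is upgrading the signed measure from Riesz--Markov--Kakutani into a positive one (which crucially exploits the one-sided shape of $U$ through the constant function $\mathbf{1}$) together with the Zorn-style ``push-down'' argument in stage (i); both steps rely on the closed-from-below hypothesis rather than mere convexity of $S$.
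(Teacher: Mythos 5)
Your stages (ii) and (iii) are essentially the paper's own argument: separate the risk set from a convex set with nonempty interior sitting below the boundary point, invoke Riesz--Markov--Kakutani, and upgrade the signed measure to a positive one by testing against perturbations that stay in the separating set (your $r^*-g-\alpha\mathbf{1}$ versus the paper's $-\alpha g+r$ is a cosmetic difference), then normalize and conclude via \eqref{eq:risk_eq}. Where you genuinely diverge is stage (i): the paper's proof only establishes that every \emph{admissible} rule is Bayes, and silently identifies this with the stated conclusion that Bayes rules form an essentially complete class; you correctly notice that one must also show every rule is weakly dominated by something on $\L(S)$, which is a real additional step. That said, your stage (i) is the one place that is under-justified as written: for a chain in $Q_{r_d}\cap\bar S$ ordered by pointwise domination, the pointwise infimum need not be continuous, need not lie in $\bar S$, and the chain need not converge in the sup norm of $C(\Theta)$, so Zorn's lemma does not apply off the shelf. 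With $\Theta$ finite (so $C(\Theta)\cong\mathbb{R}^{|\Theta|}$) the argument goes through because bounded decreasing chains converge to their infimum inside the closed set $\bar S$; in the general compact Hausdorff setting you would need an extra hypothesis (e.g.\ compactness of $\bar S$ in a suitable topology) or a different selection argument. Since the paper omits this step entirely, your proposal is at worst no less rigorous than the source, and on the main analytic content it is correct and matches the paper's route.
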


\begin{proof}

Let $d\in \Delta(D)$ be an admissible rule and let $r(\cdot)=R(\cdot,d)\in S$ be its associated risk profile. By the admissibility of $d$, we have $(Q_r\cap C(\Theta))\cap \bar{S}=r$ and therefore any risk profile associated with an admissible rule is on the lower boundary $\mathcal{L}(S)$ of the risk set. 

Since $S$ and $Q_r\cap C(\Theta)$ are convex, $Q_r\cap C(\Theta)$ has a nonempty interior and $(Q_r\cap C(\Theta))\cap \bar{S}=r$, the separating hyperplane theorem (check \cite{Aliprantis2006} Section 5.13 or \cite{luenberger1969optimization} Thm 2. Section 5.12) implies that there is a continuous linear function $\psi$ separating $Q_r\cap C(\Theta)$ and $S$  achieving its minimum on the set $S$ at $r$. That is,

\begin{equation}\label{eq:sep}
     \sup_{f\in Q_r\cap C(\Theta)}\psi(f)\leq \psi(r)=\min_{s\in S}\psi(s).
\end{equation}

Since $\Theta$ is compact, the Riesz–Markov–Kakutani representation theorem assures us that there exists a finite signed measure $\mu_\psi$ on $\Theta$ representing the continuous linear function $\psi$ as

\begin{equation}\label{eq:rep}
     \psi (f)=\int_{\Theta}f(\theta)\,d\mu_\psi (\theta) \, ,  \qquad \forall f\in C(\Theta).
\end{equation}

To show that $\mu_\psi$ is a non negative measure, by the second part of the Riesz–Markov–Kakutani representation theorem, it is enough to show that  $\psi(g)\geq 0$, for every positive function $g\in C(\Theta)$. Assume that it is not the case and there exists a positive function $g\in C(\Theta)$ with $\psi(g)< 0$. Let $g_\alpha = -\alpha g+r$ for $\alpha >0$. By the positivity of $g$, $g_\alpha\in Q_r\cap C(\Theta)$ for every $\alpha >0$. Moreover, by the linearity of $\psi$, $\psi(g_\alpha) = \psi(-\alpha g + r)=-\alpha\psi(g) + \psi(r) > \psi(r)$ for every $\alpha > 0$.  However, by the choice of $\psi$ as in \eqref{eq:sep}, we should have $\psi(g_\alpha)\leq \psi(r)$, which is a contradiction. Therefore, the measure $\mu_\psi$ is a finite non negative measure, and by normalizing it we can assume, without loss of generality, that it is a probability measure.


Finally, observe that \eqref{eq:sep} and \eqref{eq:rep} imply that

$$r\in \operatorname{argmin}_{s\in S}\psi(s)=\operatorname{argmin}_{s\in S} \int_{\Theta}s(\theta)\,d\mu_\psi (\theta)$$
and \eqref{eq:risk_eq} implies that

$$\min_{\alpha\in \Delta(D)} \int_{\Theta}R(\theta,\alpha)\,d\mu_\psi (\theta)=\min_{s\in S} \int_{\Theta}s(\theta)\,d\mu_\psi (\theta).$$
Consequently, since $R(\cdot,d)=r$, we have

$$\min_{\alpha\in \Delta(D)} \int_{\Theta}R(\theta,\alpha)\,d\mu_\psi (\theta)=\min_{s\in S} \int_{\Theta}s(\theta)\,d\mu_\psi (\theta)= \int_{\Theta}r(\theta)\,d\mu_\psi (\theta)=R(\mu_\psi,d).$$
Hence, the decision rule $d$ is a Bayes decision rule with respect to the probability measure $\mu_\psi$ on $\Theta$. This completes the proof.
\end{proof}
In the more general case, such as where $\Theta$ is not compact, or the risk set is not closed from below, the Bayes decision rules do not necessarily form a complete class. However, similar geometrical arguments give us insight regarding the form of admissible rules. In many of the more general cases, admissible rules are limits of Bayes decision rules or are the minimizers of the risk function with respect to measures that are not necessarily finite measures. 

\begin{remark}\label{remark_technical_dominated masures}
Note that in many cases, such as when $P_\theta$ is an exponential family, and the loss function is a squared loss, the assumptions of the theorem are satisfied. More generally, if the loss function is continuous in its first argument, the quantity of the interest $q$ is continuous, $P_\theta$ are absolutely continuous with respect to the Lebesgue measure, and the density functions associated with $P_\theta$ are continuous in $\theta$ for every $x$, then the risk function is a continuous function of $\theta$ for every selected decision rule. 

\end{remark}

\begin{discussion}
Every admissible rule is the best response to a prior by the complete class theorem. However, it is not trivial to check whether a rule is admissible or not. To be precise, if the decision-maker has access to the set of rules, it is not trivial to check whether she is working with the admissible ones or not. 

For example, in the case that $x\sim \mathcal{N}_d(\theta, I)$ where $d$ is the dimension of the parameter space, one might consider the sample average as their rule. However, as the parameter space dimension becomes larger ($d\geq3$), the shrinkage-based estimator will beat the sample average for the mean square loss function. 

To be precise, in the case of single observation $x_1\sim \mathcal{N}_d(\theta,I)$, the James-Stein estimator
\begin{equation}
 \hat{\theta}_{JS}=\left(1-{\frac {(d-2)}{\|{\mathbf{x_1} }\|^{2}}}\right){\mathbf {x_1} },
\end{equation}
is going to dominate $x_1$ with respect to the Mean Square Loss function. The more interesting observation is that another class of rules can dominate the James-Stein estimator itself and as a result it is not admissible as well. 

However, we are assuming that the knowledge of the complete class theory makes us model as if we are minimizing a loss function with respect to a prior. The assumption might be incorrect in practice. 

Generally speaking, for a given decision rule $d$, we can not go over all the priors to check if it is admissible or not. However, one might minimize the loss function with respect to a prior, check the Bayes risk of that prior, and compare it with the risk given by the decision rule $d$. Accepting a rule as an approximately admissible is another question that is not our primary concern in this paper.
\end{discussion}

\subsection{Step 3; Interpreting a prior as a preference ranking over the risk set}\label{step3}

Lets elaborate more on the consequence of the complete class theorem. Again, the interpretation is through the following geometrical picture. Every prior $\pi\in \mathcal{P}(\Theta)$ induces a continuous linear functional $\langle\pi, s \rangle = \int_{\Theta}s(\theta)\,d\pi(\theta)$ over the risk set $\mathcal{S}=\{s\in \mathbb{R}^\Theta\ |\  \exists\  d\in\Delta(D)\text{ s.t } s(\theta)=R(\theta,d) \text{ for all } \theta\in \Theta \}$. The induced linear functional $\langle \pi,\cdot\rangle: \mathcal{S}\to \mathbb{R}$ is ranking the elements of the risk set with respect to their risk associated with the prior $\pi$. As a result of the complete class theorem, every admissible model is associated with the highest ranked point in the risk set with respect to a ranking associated with a prior $\pi \in \mathcal{P}(\Theta)$. 

Therefore, one may think about an admissible modeling rule  $f:\mathcal{E}^* \to \Delta(D)$ as a minimizer of the induced rankings of priors over the risk set $\mathcal{S}$. That is, for every $A \in \mathcal{E}^*$, there exists a $\pi_A\in \mathcal{P}(\Theta)$ that can be interpreted as a ranking over the risk set $\mathcal{S}$. The final model, $f(A)$,  is the one that has the highest-ranked over all the elements of $\mathcal{S}$ with respect to the ranking induced by $\pi_A$. Formally, we can define the connection as follows.

\begin{definition}
Let $f:\mathcal{E}^* \to \Delta(D)$ be an admissible modeling rule. A \textbf{\emph{ranking rule}} is a function $g_f: \mathcal{E}^* \to \mathcal{P}(\Theta)$ such that $f(A) \in \argmin_{d \in \Delta(D) } R(g_f(A), d)$ for every $A \in \mathcal{E}^*$. 
\end{definition}

To emphasis the view that each prior ranks the risk set linearly and for the simplicity of the notation, for every prior, we define a weak order $\succsim$ as follows:

\begin{definition}
Let $\pi\in \mathcal{P}(\Theta)$ be a prior over the set of states of nature. The weak order (reflexive, transitive, and complete binary relation) $\succsim_\pi$ over the risk set $\mathcal{S}=\{s\in \mathbb{R}^\Theta\ |\  \exists\  d\in\Delta(D)\text{ s.t } s(\theta)=R(\theta,d) \text{ for all } \theta\in \Theta \}$, is defined as:

\begin{equation}
s_1\succsim_\pi s_2 \Leftrightarrow \langle\pi,s_1\rangle\leq \langle\pi,s_2\rangle
\end{equation}

\end{definition}

As a consequence, for every $A\in \mathcal{E}^*$ we may interpret the prior $g_f(A) \in \mathcal{P}(\Theta)$ as its associated weak order $\succsim_{g_f(A)}$ over the elements of the risk set $\mathcal{S}$. Suppose we accept this viewpoint as a viable one. In that case, we may interpret the ranking rule $g_f$ as a ranking mechanism in which, by observing a subset of  experts $A\in \mathcal{E}$, ranks the risk set $\mathcal{S}$ using the induced weak order $\succsim_{g_f(A)}$. 
\begin{discussion}
In practice, we might not have a flat indifference curve. In other words, the assumption that the decision-maker may have a linear variety as her indifference set in the risk set might not be a viable one. There are approaches to handle these issues; however, it is not the primary concern in the paper. 

\end{discussion}

\begin{discussion}
Generally, there is no bijection between the set of priors and the set of admissible rules. If the boundary of the risk set is smooth enough (such that the set of sub-differentials have a unique element), then we can form a bijection between the set of decision rules and priors. Otherwise, as in fig ~\ref{fig_ambiguity}, an admissible rule may be the best response to distinct priors. In those cases, there will be an ambiguity between the selection of a prior that in a better way captures the main characteristics the modeler is interested in.  However, in this paper, we will not deal with such situations, and we will only assume that the modeler always selects a prior (this prior leads to a ranking of the risk profiles in the risk set). The process used to select the prior can be arbitrary.

\begin{figure}[ht]
    \centering
    \includegraphics[width=0.7\textwidth]{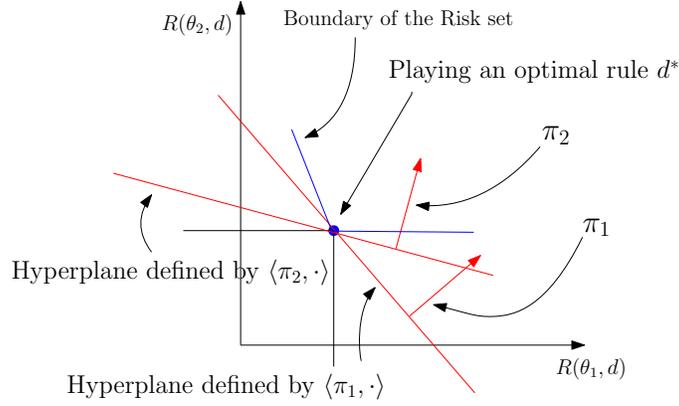}
    
    \caption{The decision rule $d^*$ is the best response with respect to both $\pi_1$ and $\pi_2$.}
    \label{fig_ambiguity}
\end{figure}

\end{discussion}

\subsection{Step 4; Consistency}\label{step4}

As a consequence of the last three steps, we reduce the problem from the class of modeling rule $f:\mathcal{E}^*\to \mathcal{M}$, to the class of ranking rules $g_f:\mathcal{E}^*\to \mathcal{P}(\Theta)$. In this step, our goal is to assess how to combine a result of two separate ranking orders $g_f(A)$ and $g_f(B)$ to form $g_f(A\cup B)$.

Lets consider the following simple example. Let $A=\{e_1,e_2\}$ with $g_f(e_1)=\pi_1$ and $g_f(e_2)=\pi_2$.  As a result of discussions in the step 3, to form $g_f(A)$ we may consider the aggregation of the corresponding weak orders $\succsim_{\pi_1}, \succsim_{\pi_2}$ over the risk set $\mathcal{S}$. One might assume that if both weak orders prefer a risk profile $s_1$ to another risk profile $s_2$, the aggregated ranking should also respect this order. If this is the case, we call the ranking to be \textbf{\emph{consistent}} with respect to both priors $\pi_1$ and $\pi_2$. For a general ranking mechanism, we may generalize the definition as follows.

\begin{definition}
Let $g_f:\mathcal{E}^*:\to\mathcal{P}(\Theta)$ be a ranking rule over the risk set $\mathcal{S}$. We say that $g_f$ is \textbf{\emph{weakly consistent}} if for every two disjoint sets $A, B\in \mathcal{E}^*$, and for every two risk profiles $s_1,s_2 \in \mathcal{S}$,

\begin{equation}\label{def_first_expr}
s_1 \succsim_{g_f(A)} s_2\ , \ s_1\succsim_{g_f(B)}  s_2 \Rightarrow s_1 \succsim_{g_f(A\cup B)}  s_2
\end{equation}

Moreover, it is \textbf{\emph{consistent}} if it also satisfies the following condition:
\begin{equation}\label{def_second_expr}
s_1 \succ_{g_f(A)} s_2\ , \  s_1 \succsim_{g_f(B)}  s_2 \Rightarrow s_1 \succ_{g_f(A\cup B)}  s_2
\end{equation}
\end{definition}

To better understand the geometry of the consistency, let $g_f:\mathcal{E}^*\to\mathcal{P}(\Theta)$  be a consistent ranking rule. Consider two disjoint subsets of experts $A, B\in \mathcal{E}^*$ and two risk profiles $s_1,s_2\in S$ such that $s_1 \succsim_{g_f(A)} s_2, \ s_1\succsim_{g_f(B)}  s_2$. Using the definition of the $\succsim_{g_f(A)}$ and $\succsim_{g_f(B)}$, we obtain that $\langle g_f(A),s_1-s_2\rangle\geq 0$ and $ \langle g_f(B),s_1-s_2\rangle\geq 0 $. Consistency implies that $s_1 \succsim_{g_f(A\cup B)} s_2$. Therefore, we should have  $\langle g_f(A\cup B),s_1-s_2\rangle\geq 0 $. Using the duality (Farkas' Lemma for finite dimensional cases or Hahn-Banach Theorem for general cases), the continuous linear function represented by $g_f(A\cup B)$ should be in the cone generated by $g_f(A),g_f(B)$ in the dual space of $C(\Theta)$. However, since $g_f(A\cup B)$ is a probability distribution, it should be a convex combination of $g_f(A)$ and $g_f(B)$. That is it is a randomization of $g_f(A)$ and $g_f(B)$ by some positive weight. Note that the condition~\ref{def_second_expr} in the definition of consistency, guaranteed that $f(A\cup B)$ should be in the interior of the line segment connecting $g_f(A)$ and $g_f(B)$ in the dual space of the risk set. Therefore, we may connect the consistency to another condition that has been studied in different litterateur with different names.

\begin{definition}We say that a ranking rule $g_f: \mathcal{E}^*\to \mathcal{P}(X)$ satisfies the \emph{\textbf{weighted averaging}} property if for all
 $A,B\in \mathcal{E}^*$ such that $A\cap B=\emptyset$, it holds true that
\begin{equation}\label{eqhejhdgeydg}
g_f(A\cup B)=\lambda g_f(A)+(1-\lambda)g_f(B)
\end{equation}
for some $\lambda \in [0,1]$ (which may depend on $A$ and $B$). We say that $f$ satisfies the \emph{\textbf{strict weighted averaging}} property if
\eqref{eqhejhdgeydg} holds true for $\lambda \in (0,1)$.
\end{definition}

Therefore, as a result of the duality, the two conditions are the same. 

\begin{lemma}\label{lem_consistencyeqwa}
Let $g_f:\mathcal{E}^*\to\mathcal{P}(\Theta)$ be a ranking rule. Then, the followings are equivalent:
\begin{enumerate}
\item $g_f$ is consistent.
\item $g_f$ satisfies the strict weighted averaging axiom.
\end{enumerate}
Moreover, the followings are also equivalent:
\begin{enumerate}

\item $g_f$ is weakly consistent.
\item $g_f$ satisfies the weighted averaging axiom.

\end{enumerate}
\end{lemma}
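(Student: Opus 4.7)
My plan is to exploit the bilinearity of the pairing $(\pi, s) \mapsto \langle \pi, s\rangle = \int s\, d\pi$: linear in $\pi$ on probability measures, linear in $s$ on the risk set. The easy directions (weighted averaging $\Rightarrow$ weak consistency, and strict weighted averaging $\Rightarrow$ consistency) will fall out of this linearity alone. Writing $\pi_A = g_f(A)$, $\pi_B = g_f(B)$, $\pi = g_f(A \cup B)$ for disjoint $A, B \in \mathcal{E}^*$ and supposing $\pi = \lambda \pi_A + (1-\lambda)\pi_B$ with $\lambda \in [0,1]$, for every $s_1, s_2 \in \mathcal{S}$ one has
\[
\langle \pi, s_1 - s_2\rangle = \lambda\langle \pi_A, s_1 - s_2\rangle + (1-\lambda)\langle \pi_B, s_1 - s_2\rangle.
\]
If both terms on the right are nonpositive (which is exactly what $s_1 \succsim_{\pi_A} s_2$ and $s_1 \succsim_{\pi_B} s_2$ say), so is the left, giving $s_1 \succsim_\pi s_2$; if additionally $\lambda \in (0,1)$ and $s_1 \succ_{\pi_A} s_2$, the first summand is strictly negative and the second nonpositive, yielding $s_1 \succ_\pi s_2$.

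For the converse direction (weak consistency $\Rightarrow$ weighted averaging), my plan is to show that $\pi$ lies in the convex cone $C := \{\alpha \pi_A + \beta \pi_B : \alpha, \beta \geq 0\}$ inside the space $\mathcal{M}(\Theta)$ of finite signed Borel measures viewed as the dual of $C(\Theta)$; once this is established, normalization $\pi(\Theta) = 1 = \alpha + \beta$ forces $\pi = \lambda\pi_A + (1-\lambda)\pi_B$ with $\lambda \in [0,1]$. Supposing instead that $\pi \notin C$, since $C$ is a closed convex cone, Hahn--Banach separation in the pairing $(C(\Theta), \mathcal{M}(\Theta))$ produces $f \in C(\Theta)$ with $\langle \pi_A, f\rangle \leq 0$, $\langle \pi_B, f\rangle \leq 0$, and $\langle \pi, f\rangle > 0$. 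Realizing such an $f$ as a difference $f = s_1 - s_2$ of two risk profiles in $\mathcal{S}$ then contradicts weak consistency and closes the argument. Upgrading to $\lambda \in (0,1)$ under the strict consistency hypothesis is a separate, shorter step: the case $\pi_A = \pi_B$ is trivial with $\lambda = 1/2$, while for $\pi_A \neq \pi_B$ one exhibits $s_1, s_2 \in \mathcal{S}$ with $\langle \pi_A, s_1 - s_2\rangle < 0$ and $\langle \pi_B, s_1 - s_2\rangle = 0$ (and symmetrically), so that $\lambda = 0$ would force $s_1 \succ_{\pi_A} s_2$, $s_1 \succsim_{\pi_B} s_2$ but $s_1 \not\succ_\pi s_2$, contradicting consistency.

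The main obstacle is the realization step in the converse direction: Hahn--Banach produces a separating $f \in C(\Theta)$, but weak consistency only constrains $g_f$ on differences that actually sit in $\mathcal{S} - \mathcal{S}$. Closing this gap requires $\mathcal{S} - \mathcal{S}$, or its closed linear span, to be rich enough in $C(\Theta)$ to intersect every open halfspace separating $\pi$ from $C$. This is the same richness implicit in the geometric narrative preceding the lemma and in the complete class theorem (where tangent hyperplanes to the lower boundary of $\mathcal{S}$ exhaust $\mathcal{P}(\Theta)$), so I plan to invoke it without further comment and focus the formal write-up on the separation and linearity bookkeeping.
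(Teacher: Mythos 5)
Your proposal follows essentially the same route as the paper's own (informal) proof: the forward implications come from the bilinearity of $\langle \pi, s\rangle$ exactly as you compute, and the converse is the Farkas/Hahn--Banach separation argument placing $g_f(A\cup B)$ in the cone generated by $g_f(A)$ and $g_f(B)$, with the normalization $\pi(\Theta)=1$ converting cone coefficients into convex weights. The realization gap you flag --- that separation produces an arbitrary $f\in C(\Theta)$ while consistency only constrains differences lying in $\mathcal{S}-\mathcal{S}$, so one needs the span of $\mathcal{S}-\mathcal{S}$ to be rich enough --- is genuine (the equivalence fails for a degenerate risk set), but the paper's argument passes over it in exactly the same way and implicitly assumes the requisite non-degeneracy, as in the hypotheses of the corollaries that follow.
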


To elaborate more on the above observation, let $g_f$ be a consistent ranking rule and $A=\{e_1, \ldots, e_n\}\in \mathcal{E}^* $. By applying the result of the lemma \ref{lem_consistencyeqwa}, $g_f(A)$ is in the convex hull of the probability measures $g_f(e_i),\ i\in \{1, \ldots, n\}$. That means, there exists a randomization of $g_f(e_i)$ by a probability measure represented by $(\lambda_{e_1}, \ldots, \lambda_{e_n})\in \mathcal{P}(\{e_1,\ldots,e_n\})$ such that $g_f(A) = \sum_i \lambda_{e_i} g_f(e_i)$. Consequently lemma \ref{lem_consistencyeqwa} results in $g_f(A)\in \operatorname{ConvexHull}(g_f(e_i))$, with $e_i\in \mathcal{E}$, which we call a \emph{\textbf{coordinate wise Pareto}}.

\begin{definition}We say that a ranking rule $g_f: \mathcal{E}^*\to \mathcal{P}(X)$ is \emph{\textbf{coordinate wise Pareto}} if for all
 $A\in \mathcal{}{E}^*$, 
 
\begin{equation}
g_f(A)\in\operatorname{ConvexHull} \{g_f(e)| \ e\in A\}
\end{equation}

\end{definition}

We can have a better understanding of the Pareto through the lenses of duality.

\begin{lemma}
The rule $g_f: \mathcal{E}^*\to \mathcal{P}(X)$ is coordinate wise Pareto if and only if
for every set $A=\{e_1,\ldots,e_n\} \in {E}^*$, and for every two risk profiles $s_1,s_2 \in \mathcal{S}$,

\begin{equation}
s_1 \succsim_{g_f(e_1)} s_2\ , \ldots, \ s_1\succsim_{g_f(e_n)}  s_2 \Rightarrow s_1 \succsim_{g_f(A)}  s_2
\end{equation}
\end{lemma}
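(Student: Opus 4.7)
The plan is to prove the two implications separately. The forward direction (coordinate-wise Pareto implies preservation of the common preference) is a short direct calculation; the reverse direction is the substantive one and will proceed by contrapositive, via a Hahn--Banach separation argument in the dual of $C(\Theta)$ followed by a realization step inside the risk set.

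For the forward direction, suppose $g_f(A)=\sum_{i=1}^n \lambda_i\,g_f(e_i)$ with $\lambda_i\geq 0$ and $\sum_i\lambda_i=1$. If $s_1,s_2\in\mathcal{S}$ satisfy $\langle g_f(e_i),s_1\rangle\leq \langle g_f(e_i),s_2\rangle$ for every $i$, then taking the convex combination with weights $\lambda_i$ yields $\langle g_f(A),s_1\rangle\leq \langle g_f(A),s_2\rangle$ by linearity of the pairing, which is exactly $s_1\succsim_{g_f(A)} s_2$.

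For the reverse direction I argue by contrapositive: assume $g_f(A)\notin C:=\operatorname{ConvexHull}\{g_f(e_1),\dots,g_f(e_n)\}$. Since $C$ is the convex hull of finitely many Borel probability measures on the compact Hausdorff space $\Theta$, it is weak-$*$ compact and convex in $C(\Theta)^*$. The Hahn--Banach separation theorem (strict separation of a point from a compact convex set) supplies $h\in C(\Theta)$ with $\langle g_f(A),h\rangle < \langle g_f(e_i),h\rangle$ for every $i$. Because every measure involved is a probability measure, the shift $h\mapsto h+c$ translates each pairing by $c$; choosing $c$ in the nonempty interval $\bigl[-\min_i\langle g_f(e_i),h\rangle,\,-\langle g_f(A),h\rangle\bigr)$ produces $\tilde h\in C(\Theta)$ with $\langle g_f(e_i),\tilde h\rangle\geq 0$ for every $i$ and $\langle g_f(A),\tilde h\rangle<0$.

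The final step is to realize $\tilde h$ (possibly after multiplication by a positive scalar) as a difference $s_2-s_1$ of two actual risk profiles in $\mathcal{S}$: such a pair would witness $s_1\succsim_{g_f(e_i)} s_2$ for every $i$ while violating $s_1\succsim_{g_f(A)} s_2$, giving the desired contradiction. This realization step is the principal obstacle, because $\mathcal{S}$ is a proper subset of $C(\Theta)$ and $\mathcal{S}-\mathcal{S}$ need not cover the direction produced by separation. The intended resolution, consistent with the informal duality argument preceding the lemma, is to invoke the implicit richness of the risk set: in the standard settings of Remark~\ref{remark_technical_dominated masures} the closed conic span of $\mathcal{S}-\mathcal{S}$ is weak-$*$ dense in $C(\Theta)$, so $\tilde h$ is approximable to arbitrary precision by $s_2-s_1$ with $s_1,s_2\in\mathcal{S}$, and the strict inequality $\langle g_f(A),\tilde h\rangle<0$ survives any sufficiently small perturbation, closing the contrapositive.
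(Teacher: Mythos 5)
Your forward direction is correct and complete: linearity of the pairing immediately transfers a common weak inequality across a convex combination. For the reverse direction you have chosen exactly the route the paper itself gestures at --- the paper states this lemma without proof, relying on the informal Farkas/Hahn--Banach paragraph that precedes the definition of the weighted averaging property --- and your separation and translation steps are sound: the convex hull of finitely many Borel probability measures is weak-$*$ compact, strict separation yields $h\in C(\Theta)$, and since all measures involved are probability measures the constant shift preserves the separation while normalizing the signs as you describe.

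However, the step you flag as ``the principal obstacle'' is a genuine gap, not a technicality, and your proposed resolution does not close it. The biconditional as stated is in fact false without an additional richness hypothesis on $\mathcal{S}$: if, for instance, every decision rule has a constant risk profile, so that $\mathcal{S}-\mathcal{S}$ consists of constant functions, then the displayed implication holds for \emph{every} choice of $g_f(A)\in\mathcal{P}(\Theta)$, since $\langle\pi,c\rangle=c$ for any probability measure $\pi$, and yet $g_f(A)$ need not lie in the convex hull. More generally, the ranking $\succsim_\pi$ determines $\pi$ only through the restriction of the functional $\langle\pi,\cdot\rangle$ to $\mathcal{S}-\mathcal{S}$, so no separation argument can recover membership in the convex hull unless $\mathcal{S}-\mathcal{S}$ realizes the sign pattern $(\geq 0,\dots,\geq 0,<0)$ against the $n+1$ measures involved; this is only a finite-dimensional condition, but it is still an assumption that appears nowhere in the paper. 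Your appeal to ``weak-$*$ density of the closed conic span of $\mathcal{S}-\mathcal{S}$ in $C(\Theta)$'' is both unproven and a category slip ($C(\Theta)$ is the predual here, not a dual space), and it does not follow from the continuity hypotheses of the paper's remark on exponential families. The honest conclusion is that the reverse implication requires either an explicit richness assumption on the risk set or a weakening of ``coordinate wise Pareto'' to ``induces the same ranking on $\mathcal{S}$ as some element of the convex hull''; the paper's own treatment, which offers no proof, suffers from the same defect, so your write-up is at least as rigorous as the source and has the merit of naming the missing hypothesis.
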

A simple induction shows that all consistent ranking rules are coordinate-wise Pareto.

\begin{corollary}
Every consistent ranking rule $g_f: \mathcal{E}^*\to \mathcal{P}(X)$ is a coordinate wise Pareto ranking rule.
\end{corollary}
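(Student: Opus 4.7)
The plan is to obtain the corollary by a straightforward induction on $|A|$, after using Lemma~\ref{lem_consistencyeqwa} to convert consistency into the (pairwise) strict weighted averaging property. The base case $|A|=1$ is immediate: if $A=\{e\}$, then $g_f(A)=g_f(e)$ trivially lies in $\operatorname{ConvexHull}\{g_f(e)\}$.

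For the inductive step, I would assume that coordinate wise Pareto holds for every subset of $\mathcal{E}$ of cardinality at most $n-1$, and fix $A=\{e_1,\ldots,e_n\}\in\mathcal{E}^*$ with $n\geq 2$. Decompose $A = A' \cup \{e_n\}$ where $A'=\{e_1,\ldots,e_{n-1}\}$; this is a disjoint union in $\mathcal{E}^*$. Since $g_f$ is consistent, Lemma~\ref{lem_consistencyeqwa} supplies a weight $\lambda\in(0,1)$ with
\begin{equation*}
g_f(A) \;=\; \lambda\, g_f(A') + (1-\lambda)\, g_f(\{e_n\}).
\end{equation*}
By the inductive hypothesis applied to $A'$, there exist non-negative weights $\mu_1,\ldots,\mu_{n-1}$ summing to one such that $g_f(A')=\sum_{i=1}^{n-1}\mu_i\, g_f(\{e_i\})$. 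Substituting,
\begin{equation*}
g_f(A) \;=\; \sum_{i=1}^{n-1} \lambda\mu_i\, g_f(\{e_i\}) \;+\; (1-\lambda)\, g_f(\{e_n\}),
\end{equation*}
which is manifestly a convex combination of $g_f(\{e_1\}),\ldots,g_f(\{e_n\})$ and therefore lies in $\operatorname{ConvexHull}\{g_f(e)\mid e\in A\}$, as required.

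There is essentially no obstacle here: the real work has already been done in Lemma~\ref{lem_consistencyeqwa}, which converts the order-theoretic consistency condition into an algebraic pairwise averaging statement, and in the preceding duality lemma that identifies the coordinate wise Pareto convex-hull condition with the universal rule-preservation property. The only mild subtlety is that the induction relies on consistency being available for the subset $A'$ and the singleton $\{e_n\}$, which is automatic since consistency is a uniform quantification over all finite disjoint pairs in $\mathcal{E}^*$. Edge cases where $g_f(A')=g_f(\{e_n\})$ are harmless, because the displayed convex combination remains a valid element of the target convex hull regardless of whether the combination is degenerate.
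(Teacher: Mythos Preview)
Your proof is correct and matches the paper's own approach: the paper states only that ``a simple induction shows'' the result, and the discussion preceding the corollary already sketches the same step (apply Lemma~\ref{lem_consistencyeqwa} to write $g_f(A)$ as a convex combination of the $g_f(e_i)$). Your write-up is simply a fuller version of that one-line remark.
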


One might wonder whether the opposite of the above observation is also true or not. In other words, whether the consistency is only about $g_f(A)$ being in the convex hull of the $g_f(e_i)$, for $e_i\in A$, or not. The answer is no. Consistency is a stronger assumption. To better understand it, consider the following example.

\begin{example}\label{Example_c_vs_cwp}

To elaborate more on the above observation, let $g_f$ be a coordinate wise Pareto rule and $A=\{x, y, z\}, B = \{x,y,w\}\in \mathcal{E}^* $ with $z \neq w$. Hence, there exists a two set of randomization $\lambda^A, \lambda^B$ on elements of $A, B$ such that, $g_f(A) = \lambda^A_{x} g_f(x) +\lambda^A_{y} g_f(y)+\lambda^A_{z} g_f(z)$ and $g_f(B) = \lambda^B_{x} g_f(x) +\lambda^B_{y} g_f(y)+\lambda^B_{w} g_f(w)$. 

Without the consistency, there is nothing more to be said. However, with consistency there is a connections between $ \lambda^A_{x}/  \lambda^A_{y}$  and $\lambda^B_{x}/ \lambda^B_{y}$. And the connection is that it is always possible to make $ \lambda^A_{x}/  \lambda^A_{y} = \lambda^B_{x}/ \lambda^B_{y}$.

To be more precise, consider the figure \ref{fig_example_1_1}. We will show that by knowing $g_f(x,y), g_f(z,y), g_f(z,w)$, we can deduce $g_f(x,y,z), g_f(x,y,w)$ uniquely.

\begin{figure}[ht]
    \centering
    \includegraphics[width=0.5\textwidth]{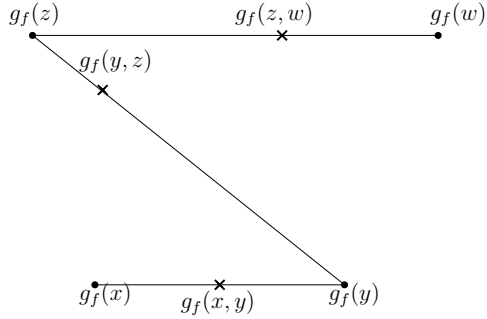}

    \caption{We are assuming that the value of $g_f$ is known at $\{x\}, \{y\}, \{z\},\{w\},\{x,y\}, \{y,z\},\{z,w\}$. The goal is to find $g_f(x,y,z),g_f(x,y,w)$ in a unique way.}
    \label{fig_example_1_1}
\end{figure}

\begin{figure}[htbp]
\subfloat[$g_f(x,y,z)$.]{\label{fig_example_1_2}\includegraphics[width=.45\linewidth]{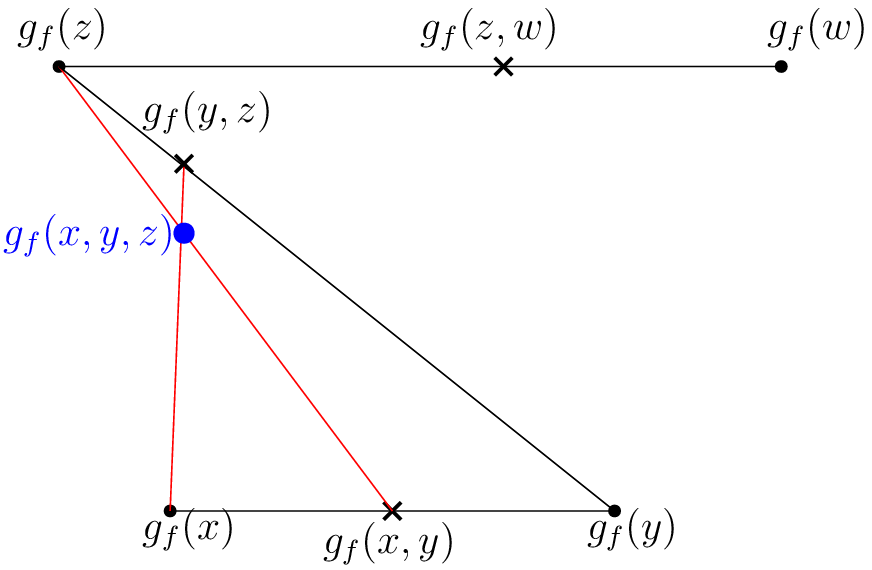}}\hfill
\subfloat[$g_f(x,y,z,w)$.]{\label{fig_example_1_3}\includegraphics[width=.45\linewidth]{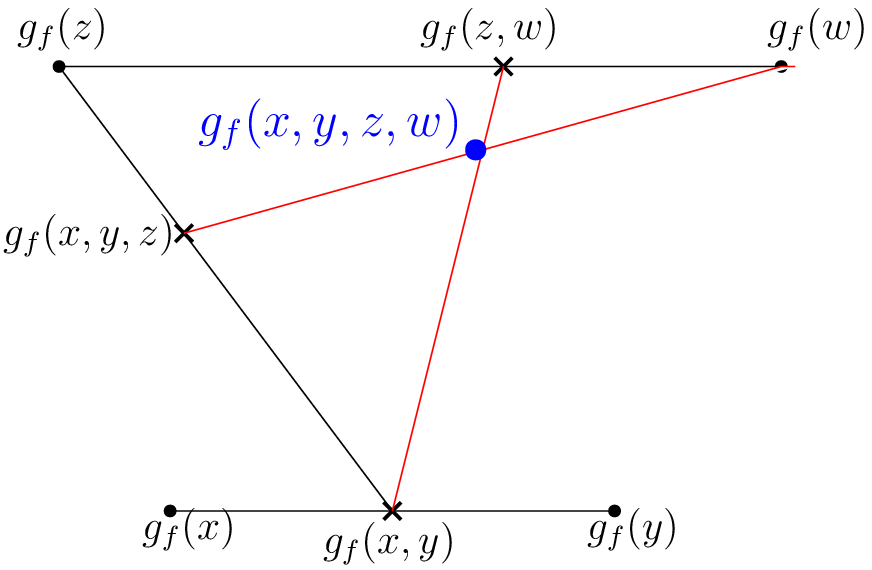}}\par 
\subfloat[$g_f(x,y,w)$.]{\label{fig_example_1_4}\includegraphics[width=.45\linewidth]{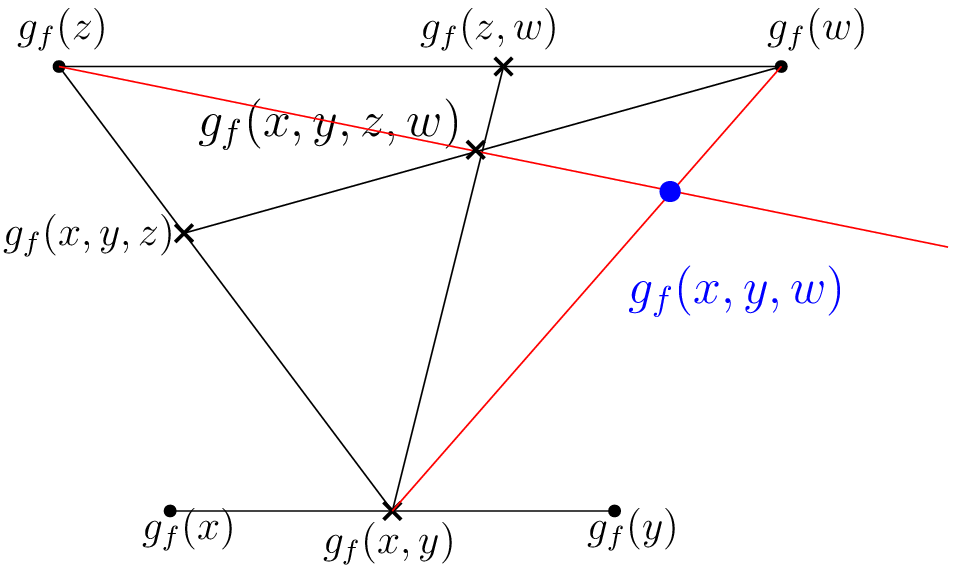}}
\caption{By consistency, we can inductively obtain $g_f(x,y,z), g_f(x,y,z,w)$, and $g_f(x,y,w)$.}
\label{fig_example_1}
\end{figure}

First, as in the figure \ref{fig_example_1_2}, by consistency $g_f(x,y,z)$ is on the intersection of the line joining $g_f(x), g_f(y,z)$ and the line joining $g_f(z),g_f(x,y)$. Then, as in the figure \ref{fig_example_1_3},  again consistency shows that $g_f(x,y,z,w)$ must be the intersection of the line joining $g_f(x,y,z), g_f(w)$ and the line joining $g_f(z,w), g_f(x,y)$. Finally, as in the figure \ref{fig_example_1_4}, one last application of the consistency shows that $g_f(x,y,w)$ must be on the intersection of the line joining $g_f(z), g_f(x,y,z,w)$ and the line joining $g_f(x,y), g_f(w)$. Therefore, by consistency all three $g_f(x,y,z), g_f(x,y,w)$, and $g_f(x,y,z,w)$ are uniquely determined.




\end{example}

By inductively applying the same couple of arguments, as in the Example~\ref{Example_c_vs_cwp} (see \cite{hamzeowhadi1} Thm. 1), we obtain the following more general result.

\begin{corollary}\label{cor_AAR}
Let $g_f:\mathcal{E}^*\to\mathcal{P}(\Theta)$ be a consistent ranking rule. If the range of $g_f$ is not a subset of a one dimensional linear variety, then there exists a weight function $w: \mathcal{E}\to \mathbb{R}_{++}$ such that for every set of experts  $A\in \mathcal{E}^*$, 
\begin{equation}
g_f(A)=\sum\limits_{e_i\in A}\left(\frac{w(e_i)}{\sum\limits_{e_j\in A} w(e_j)}\right)g_f(e_i).
\end{equation}
Moreover, the weight function is unique up to multiplication by a positive number.
\end{corollary}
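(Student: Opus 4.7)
The plan is to reduce the problem to the pairwise case, extract cocycle-consistent weight ratios from consistency applied to singleton pairs, and then propagate the formula to arbitrary finite subsets by induction on $|A|$. By Lemma~\ref{lem_consistencyeqwa}, consistency is equivalent to strict weighted averaging, so for every ordered pair of distinct experts $e,e'\in\mathcal{E}$ there is a unique $\lambda(e,e')\in(0,1)$ with $g_f(\{e,e'\})=\lambda(e,e')\,g_f(e)+(1-\lambda(e,e'))\,g_f(e')$. If a weight function $w$ of the claimed form exists, it must satisfy $w(e')/w(e)=(1-\lambda(e,e'))/\lambda(e,e')=:r(e,e')$, so these ratios are the only candidates; the task becomes to show that they are mutually compatible, i.e.\ that they arise from one global positive function $w$, unique up to an overall positive scaling.

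Next I would establish the cocycle identity $r(e_1,e_3)=r(e_1,e_2)\,r(e_2,e_3)$ for arbitrary triples of distinct experts. Following the two-path argument pictured in Example~\ref{Example_c_vs_cwp}, I would compute $g_f(\{e_1,e_2,e_3\})$ via two different pairings, say $\{e_1\}\cup\{e_2,e_3\}$ and $\{e_3\}\cup\{e_1,e_2\}$. When $g_f(e_1), g_f(e_2), g_f(e_3)$ are affinely independent inside $\mathcal{P}(\Theta)$, barycentric coordinates are unique, so equating the two resulting convex combinations forces the cocycle relation on the ratios. The non-degeneracy hypothesis (range of $g_f$ not contained in a one-dimensional affine set) ensures that such an affinely independent triple of priors exists in the image; for degenerate collinear triples the identity follows by inserting a non-degenerate pivot expert and chaining through it. Fixing any reference expert $e_0$ and setting $w(e_0):=1$ and $w(e):=r(e_0,e)$ then yields a positive weight function, unique up to multiplication by a positive constant (changing the reference merely rescales $w$).

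Finally, the general formula $g_f(A)=\sum_{e\in A}\bigl(w(e)/\sum_{e'\in A}w(e')\bigr)\,g_f(e)$ would follow by induction on $|A|$. The cases $|A|=1,2$ are built in. For the inductive step, write $A=A'\cup\{e^*\}$ with $|A'|\ge 2$ and use strict weighted averaging to get $g_f(A)=\mu\,g_f(A')+(1-\mu)\,g_f(e^*)$ for some $\mu\in(0,1)$; substitute the inductive hypothesis for $g_f(A')$, and compare with the analogous decomposition obtained by singling out a different expert $e^{**}\in A$. Using an affinely independent reference triple of priors (available by the non-degeneracy hypothesis), the coefficient $\mu$ is forced to be $\sum_{e'\in A'}w(e')/\sum_{e''\in A}w(e'')$, and the claimed identity drops out. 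The main obstacle is controlling the degenerate cases in which particular triples produce collinear priors: the non-degeneracy hypothesis is used precisely to fix an affinely independent reference triple through which the cocycle relation and the barycentric identifications can be propagated globally and consistently.
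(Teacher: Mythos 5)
Your proposal is correct and follows essentially the same route as the paper, which establishes this corollary by combining Lemma~\ref{lem_consistencyeqwa} with an inductive application of the two-path barycentric argument of Example~\ref{Example_c_vs_cwp} (deferring the details to Theorem~1 of the cited reference); your cocycle identity for the ratios $r(e,e')$ and the device of chaining degenerate pairs through an affinely independent pivot are exactly the content of that argument. The only details worth adding are that pinning down $\lambda(e_1,e_3)$ requires the third pairing $\{e_2\}\cup\{e_1,e_3\}$ in addition to the two you list, and that the existence of an affinely independent triple among the \emph{singleton} images follows from the non-degeneracy hypothesis because coordinate-wise Pareto confines the entire range of $g_f$ to the convex hull of the singleton images.
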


As a consequence of the above corollary, a modeling rule $f:\mathcal{E}^*\to\Delta(\Theta)$ is consistent (and non-degenerate) if and only if it can be constructed as follows.

\begin{enumerate}

    \item select a weight function $w:\mathcal{E}\to\mathbb{R}_{++}$,
    \item figure out $g_f(e)$ for $e\in \mathcal{E}$,
    \item for every $A\in \mathcal{E}^*$, form $g_f(A)$ as in 
        \begin{equation}
            g_f(A)=\sum\limits_{e_i\in A}\left(\frac{w(e_i)}{\sum\limits_{e_j\in A} w(e_j)}\right)g_f(e_i),
        \end{equation}
    \item finally, the rule $f$ is
        \begin{equation}
            f(A) \in \argmin_{d \in \Delta(D) } R(g_f(A), d).
        \end{equation}

\end{enumerate}

We now present more general version in which, instead of consistency, we impose the weak consistency. Weakly consistent rules are characterized by both a weight function and weak order over experts. They are obtained by averaging the prior of the highest ordered experts rather than all of them.  

\begin{definition}
A binary relation $\succcurlyeq$ on $\mathcal{E}$ is a \textbf{\emph{weak order}} on $\mathcal{E}$, if it is reflexive ($x \succcurlyeq x$), transitive ($x \succcurlyeq y$ and $y \succcurlyeq z$ imply $x \succcurlyeq z$), and complete (for all $x,y \in X$, $x\succcurlyeq y$ or $y \succcurlyeq x$). We say that $x$ is equivalent to $y$, and write $x\sim y$, if $x\succcurlyeq y$ and $y \succcurlyeq x$.

Consider a weak order $\succcurlyeq$ on  $\mathcal{E}$. For $A\in \mathcal{E}^*$, write $M(A,\succsim)$ for the highest order elements in $A$.
\end{definition}

A more general result is as follows (see \cite{hamzeowhadi1} Thm. 2).

\begin{corollary}\label{cor_AAR_general}
Let $g_f:\mathcal{E}^*\to\mathcal{P}(\Theta)$ be a weakly consistent ranking rule. If $g_f$ satisfies the non-degeneracy condition (strongly richness) condition of \cite{hamzeowhadi1}, then there exist a unique weak order $\succcurlyeq$ on $\mathcal{E}$ and a weight function $w: \mathcal{E}\to \mathbb{R}_{++}$ such that for every set of experts $A\in \mathcal{E}^*$, 
\begin{equation}
g_f(A)=\sum\limits_{e_i\in M(A,\succcurlyeq)}\left(\frac{w(e_i)}{\sum\limits_{e_j\in M(A,\succcurlyeq)} w(e_j)}\right)g_f(e_i).
\end{equation}
Moreover, the weight function is unique up to multiplication by a positive number in each of the equivalence classes of the weak order$\succcurlyeq$.
\end{corollary}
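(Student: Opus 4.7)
The plan is to derive this result by building on Lemma~\ref{lem_consistencyeqwa}, which translates weak consistency into the weighted averaging axiom, and then refining Corollary~\ref{cor_AAR} to accommodate the boundary cases $\lambda \in \{0,1\}$. The key observation is that whenever $\lambda = 1$ in $g_f(A \cup B) = \lambda g_f(A) + (1-\lambda) g_f(B)$, the set $B$ contributes nothing to the aggregate, which naturally induces a weak order in which elements of $B$ sit strictly below those of $A$.

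First, I would construct the weak order $\succcurlyeq$ on $\mathcal{E}$ via pairwise comparisons. For any two experts $e_1, e_2 \in \mathcal{E}$, weighted averaging gives $g_f(\{e_1, e_2\}) = \lambda_{12} g_f(e_1) + (1-\lambda_{12}) g_f(e_2)$ for some $\lambda_{12} \in [0,1]$. Declare $e_1 \succ e_2$ when $\lambda_{12} = 1$, $e_2 \succ e_1$ when $\lambda_{12} = 0$, and $e_1 \sim e_2$ when $\lambda_{12} \in (0,1)$. Reflexivity and completeness are immediate. Transitivity requires a short argument that examines $g_f(\{e_1,e_2,e_3\})$ through two different decompositions of the triple (first $(\{e_1,e_2\}\cup\{e_3\})$, then $(\{e_1\}\cup\{e_2,e_3\})$) and uses the strong richness condition to match coefficients and conclude that the weight assigned to any strictly dominated expert vanishes.

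Second, by induction on $|A|$, I would prove that only the top equivalence class $M(A, \succcurlyeq)$ contributes to $g_f(A)$. The inductive step splits $A = M(A, \succcurlyeq) \cup (A \setminus M(A, \succcurlyeq))$, applies weighted averaging across this partition, and reads off from the transitivity of $\succ$ that the coefficient of $g_f(A \setminus M(A, \succcurlyeq))$ must equal zero. Restricted to the class $M(A,\succcurlyeq)$, the rule becomes strictly weighted-averaging, so Corollary~\ref{cor_AAR} applied within each equivalence class produces a weight function (unique up to a positive scalar on that class) and yields the displayed formula.

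The main obstacle is making the transitivity argument and the ``dominated experts receive zero weight'' step rigorous under the weakest possible non-degeneracy hypothesis. If the priors $g_f(e)$ inside some equivalence class happened to lie on a lower-dimensional affine subspace, the weights in any particular weighted-averaging relation would not be uniquely determined and the inductive identification of coefficients would break down. This is precisely the role of the strong richness assumption of \cite{hamzeowhadi1}: it guarantees enough affine independence among the $\{g_f(e)\}$ to pin down the coefficients in each weighted-averaging relation, and hence to pin down both $\succcurlyeq$ and $w$ (up to the stated scaling on each class). Once this affine-independence framework is in place, the remaining verification is essentially the algebraic bookkeeping already illustrated in Example~\ref{Example_c_vs_cwp}.
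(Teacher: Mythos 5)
The paper never proves this corollary itself: it is imported directly from Theorem~2 of \cite{hamzeowhadi1}, with only the strict case (Corollary~\ref{cor_AAR}) given an informal geometric sketch via Example~\ref{Example_c_vs_cwp}. Your plan is therefore a reconstruction rather than a restatement, and its skeleton --- pass to the weighted averaging axiom via Lemma~\ref{lem_consistencyeqwa}, define $\succcurlyeq$ by pairwise comparisons, show the top equivalence class absorbs all the weight, then run the strict argument of Corollary~\ref{cor_AAR} inside that class --- is the standard route and matches what the cited reference does. Two steps, however, are still only gestured at and are exactly where the work lies. First, your pairwise order is not obviously well defined: if $g_f(e_1)=g_f(e_2)$ then every $\lambda_{12}\in[0,1]$ satisfies the averaging identity, so ``$\lambda_{12}=1$'' and ``$\lambda_{12}=0$'' hold simultaneously and your trichotomy is ambiguous; you must either declare such pairs equivalent by convention or use the richness condition to guarantee distinct experts have affinely independent priors, so that $\lambda_{12}$ is unique. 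Second, the claim that the coefficient of $g_f(A\setminus M(A,\succcurlyeq))$ vanishes does not simply ``read off from transitivity'': transitivity is a statement about pairs, whereas here an entire block of pairwise-dominated experts must receive zero aggregate weight. The argument that works is the one you hint at for triples --- compute $g_f(A)$ along two different partitions, observe that it must lie both in the affine hull of the top-class priors and on a segment terminating at a dominated prior, and invoke affine independence to force the degenerate coefficient --- but this must be organized as an induction peeling off one dominated element at a time; you should also verify that the non-degeneracy hypothesis of Corollary~\ref{cor_AAR} (range not contained in a one-dimensional linear variety) is inherited by the restriction of $g_f$ to a single equivalence class, which again is what the strong richness assumption is for. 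None of this is a wrong turn; it is precisely the bookkeeping the paper outsources to \cite{hamzeowhadi1}.
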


As a consequence of the above corollary, a (non-degenerate) modeling rule $f:\mathcal{E}^*\to\Delta(\Theta)$ is weakly consistent if and only if it can be constructed as follows.

\begin{enumerate}

    \item select a weight function $w:\mathcal{E}\to\mathbb{R}_{++}$ and a weak order $\succcurlyeq$ on $\mathcal{E}$,
    \item figure out $g_f(e)$ for $e\in \mathcal{E}$,
    \item for every $A\in \mathcal{E}$, form $g_f(A)$ as in 
        \begin{equation}\label{main_equation2}
            g_f(A)=\sum\limits_{e_i\in M(A,\succcurlyeq)}\left(\frac{w(e_i)}{\sum\limits_{e_j\in M(A,\succcurlyeq)} w(e_j)}\right)g_f(e_i),
        \end{equation}
    \item finally, the rule $f$ is
        \begin{equation}
            f(A) \in \argmin_{d \in \Delta(D) } R(g_f(A), d).
        \end{equation}

\end{enumerate} 

The representation \eqref{main_equation2} has two components: one is captured by the weak order $\succcurlyeq$; the other is the weight function $w$. The weak order partitions the set of experts into equivalence classes and ranks them from top to bottom. If all experts $e\in A$ have the same ranking, then $g_f(A)$ is the weighted average of $g(e)$ for $e\in A$. However, if some experts have a higher ranking than others, then the rule will ignore the lower-ordered experts.
Hence, the assessment of the rule has two steps. First, it only considers the highest-ordered priors. Then, it uses the weight function and finds the weighted average among the highest-ordered priors.\\

\section{More Examples}\label{sec_examles} 

In the previous section, we interpreted the elements of  the set $\mathcal{E}$ as individual experts. We will from now interpret the elements of $\mathcal{E}$ as representing  experts and their characteristics. 

\subsection{Kernel Smoother}
Assume that the decision-maker herself is also an element of the set $\mathcal{E}$. To be precise, let $e_0\in \mathcal{E}$ be all the relevant characteristics and beliefs of the decision-maker without the observation of any other expert. Therefore, with using the same language as before, $g_f(e_0,.):\mathcal{E}^*\cup \emptyset\to\mathcal{P}(\Theta)$ and $g_f(e_0,\emptyset)$ is the prior that the decision-maker is going to use for selection the decision rules, without observing any other experts' characteristics. 

More generally, we can interpret the rule $g_f :\mathcal{E}\times\mathcal{E}^*\cup \emptyset\to\mathcal{P}(\Theta)$ as a ranking rule such
that for every characteristics of the decision maker $e\in \mathcal{E}$ and for every observation of the set of experts' characteristics $A\in \mathcal{E}^*$, $g_f(e,A)$ is the prior that the decision maker is going to use to select her decision rule.

Under the conditions of the previous section, we have the following representation.

\begin{corollary}
Let $g_f :\mathcal{E}\times\mathcal{E}^*\cup \emptyset\to\mathcal{P}(\Theta)$ be a rule such that for every $e\in \mathcal{E}$, $g_f(e,.)$ being a consistent ranking rule with the range not being a subset of a one dimensional linear variety. Then, then  exists a weight function $w: \mathcal{E}\times\mathcal{E}\to \mathbb{R}_{++}$ such that for every decision maker's characteristics $e\in \mathcal{E}$ and for every set of expert's characteristics $A\in \mathcal{E}^*$, 
\begin{equation}
g_f(e,A)=\sum\limits_{e_i\in A}\left(\frac{w(e,e_i)}{\sum\limits_{e_j\in A} w(e,e_j)}\right)g_f(e,e_i).
\end{equation}
Moreover, the weight function is unique up to multiplication by a positive number.

\end{corollary}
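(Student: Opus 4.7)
The plan is to reduce this statement to a fiberwise application of Corollary~\ref{cor_AAR}. The key observation is that the hypothesis only constrains $g_f$ through its behavior in the second argument: for each fixed decision-maker characteristic $e\in\mathcal{E}$, the slice $h_e:\mathcal{E}^*\to\mathcal{P}(\Theta)$ defined by $h_e(A):=g_f(e,A)$ is \emph{itself} a consistent ranking rule whose range is not contained in a one-dimensional linear variety. So the strategy is to apply Corollary~\ref{cor_AAR} separately for each $e$ and then stitch the resulting weight functions together into a function on $\mathcal{E}\times\mathcal{E}$.

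Concretely, I would proceed as follows. First, I would fix $e\in\mathcal{E}$ and invoke Corollary~\ref{cor_AAR} applied to $h_e$. This produces a weight function $w_e:\mathcal{E}\to\mathbb{R}_{++}$, unique up to multiplication by a positive scalar, such that for every $A\in\mathcal{E}^*$,
\begin{equation*}
h_e(A)=\sum_{e_i\in A}\frac{w_e(e_i)}{\sum_{e_j\in A}w_e(e_j)}\,h_e(e_i).
\end{equation*}
Second, I would simply define $w:\mathcal{E}\times\mathcal{E}\to\mathbb{R}_{++}$ by $w(e,e'):=w_e(e')$. Substituting $h_e(A)=g_f(e,A)$ yields the announced formula verbatim.

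For uniqueness, I would note that any other weight function $\tilde w$ verifying the claimed representation must, when restricted to $\tilde w(e,\cdot)$ for fixed $e$, satisfy the same identity that characterizes the weight in Corollary~\ref{cor_AAR} applied to $h_e$. The uniqueness part of Corollary~\ref{cor_AAR} then forces $\tilde w(e,\cdot)=c(e)\,w(e,\cdot)$ for some positive scalar $c(e)$, giving exactly the ``unique up to multiplication by a positive number'' conclusion (interpreted fiberwise in $e$, which is all the normalization in the representation formula detects because any common positive factor cancels between numerator and denominator).

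There is no real obstacle here: the result is a direct corollary obtained by varying one argument at a time and the content all sits in Corollary~\ref{cor_AAR}. The only thing worth flagging is the cosmetic point about the case $A=\emptyset$, which lies outside the scope of the weighted-averaging representation (the sum is vacuous) and must be treated as free data $g_f(e,\emptyset)$ not constrained by consistency; since consistency is stated for disjoint nonempty subsets, this causes no issue with the argument.
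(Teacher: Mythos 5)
Your proposal is correct and matches the paper's intent exactly: the paper offers no separate proof for this corollary, presenting it as a direct consequence of Corollary~\ref{cor_AAR} applied fiberwise to each slice $g_f(e,\cdot)$, which is precisely your reduction. Your additional remarks on fiberwise uniqueness and the vacuous $A=\emptyset$ case are sound and do not change the argument.
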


As a consequence of the above representation, $w$ behaves as a similarity measure between the decision maker's characteristics and the expert's characteristics. The representation is the same as the \emph{\textbf{Kernel Smoother}} in statistics. More precisely, consider the following problem.

\begin{problem}
Let $D=\{(x_1,y(x_1)) \ldots,(x_N,y(x_N))\}$ be $N$ given sample points with $x_i\in \mathbb{R}^k$ and $y(x_i)\in\mathbb{R}$. Let $x_0\in\mathbb{R}^k$ be another point, the goal is to estimate $y(x_0)$.
\end{problem}

There are many different approaches to the above problem, but one is as follows.

\begin{definition}
 Let $d:\mathbb{R}\to\mathbb{R}$ be a non increasing function, $h:\mathbb{R}^k\to\mathbb{R}$ be a hyper-parameter, and $\|\cdot\|$ be the euclidean norm. Then for every two point $w,z\in \mathbb{R}^k$ a kernel $K_{h,d}$ can be defined as follows

\begin{equation}
    K_{h,d}(w,z)=d\left({\frac {\left\|w-z\right\|}{h(w)}}\right)
\end{equation}
And for every $ x_0\in\mathbb{R}^k$, the \emph{\textbf{Nadaraya-Watson kernel-weighted average}} is defined by
\begin{equation}
y(x_0):=\frac{\sum \limits _{i=1}^{N}{K_{h,d}(x_0,x_{i})y(x_{i})}}{\sum \limits _{i=1}^{N}K_{h,d}(x_{0},x_{i})}
\end{equation}
\end{definition}

To check that the above form is a subset of our representation, we define the set of characteristics $\mathcal{E}=\{U\}\times\mathbb{R}^k \cup \{O\}\times\mathbb{R}^k\times\mathcal{P}(\Theta)$, and interpret the first coordinate of $e\in \mathcal{E}$ as being observed or unobserved, and the other coordinates as the input and the output of the function we are tying to estimates. Then, under the assumption of the consistent rule, there exists a weight function $w: \mathcal{E}\times\mathcal{E}\to \mathbb{R}_{++}$ such that for every decision maker's characteristics $e\in \mathcal{E}$ and for every set of expert's characteristics $A\in \mathcal{E}^*$, 
\begin{equation}
g_f(e,A)=\sum\limits_{e_i\in A}\left(\frac{w(e,e_i)}{\sum\limits_{e_j\in A} w(e,e_j)}\right)g_f(e,e_i).
\end{equation}

We may add the assumption that for every $e\in \mathcal{E}$ such that the first coordinate of $e$ is $O$, we should report the third coordinate as output. That is, for every $e$ such that $e_1 = O$, and for every $e_i\in \mathcal{E}$ with $(e_i)_1 = O$, we set $g(e,e_i)=(e_i)_3$. Then the result is exactly the interpolation of inputs using the similarity kernel defined by the weight function $w$.

Note that by restricting the assumptions on the form of the rule $g_f$, we may obtain a different class of weight functions. For example one can we enforce that for every two characteristics $e_1,e_2\in\mathcal{E}$, $g_f(e_1,\{e_2\})=g_f(e_2,\{e_1\})$, then the weight function $w$ in the representation must be a symmetric one.  That is, if from the decision maker's perspective there is no difference between the decision-maker being $e_1$ and the expert's being $e_2$ or the decision-maker being $e_1$ and the expert being $e_2$, we get a symmetric similarity measure. 
 


\subsection{Different Experts with the Same Set of Characteristics}
In many situations, we want to model different individuals with the same form of characteristics. To to that, we can extend the set of characteristics to be $\mathcal{E}^{\textit{ext}}=\mathbb{N}\times \mathcal{E}$, with $(i,e)\in \mathcal{E}^{\textit{ext}}$ represent an individual with index number $i$ with the pure characteristics $e$. With the same (non-degeneracy and consistency of $g_f$) assumptions as in section \ref{sec_main} and by enforcing that for every $e_1,e_2\in \mathcal{E}^\textit{ext}$ with $(e_1)_2 =(e_1)_2$, $g_f(e_1) = g_f(e_2)$, we get the following representation, that there exists a weight function $w: \mathcal{E}\to \mathbb{R}_{++}$ on the set of pure characteristics such that for every set of expert's extended characteristics $A\in {\mathcal{E}^\textit{ext}}$, 
\begin{equation}
g_f(A)=\sum\limits_{(i,e)\in A}\left(\frac{w(e)}{\sum\limits_{(j,e')\in A} w(e')}\right)g_f((1,e)).
\end{equation}

In other words, we can define the function $N:\mathcal{E}\times\mathcal{E}^{\textit{ext}^*}\to \mathbb{N}\cup \{0\}$ to count the number of appearance of the pure characteristics $e\in \mathcal{E}$ that appear in the set $A\in \mathcal{E}^\textit{ext}$ as $N(e,A)$. Then, the representation may simplify as follows

\begin{equation}
g_f(A)=\sum\limits_{e\in \mathcal{E}}\left(\frac{N(e,A) w(e)}{\sum\limits_{e'\in \mathcal{E}}N(e',A) w(e')}\right)g_f((1,e)).
\end{equation}

\subsection{Timing of Experts}

A different example is when the timing of experts matters. In more precise words, we may assume that the observation of expert's characteristics may have a timestamp, and the ones closer to the time of prediction might be more important. In that case, again we can extend the set of characteristics to have the form of $\mathcal{E}^{\textit{ext}}=\mathbb{R_{+}}\times \mathcal{E}$, where $(t,e)\in \mathcal{E}^{\textit{ext}}$ represents an experts $e$ that presents at the time $t$ before the prediction time. 

We might assume that the shifting of all the expert's timestamps by a constant factor should not affect the final model. If that is the case, we have the following notion of \emph{stationarity} property.

\begin{definition}Let $\operatorname{S}:{\mathcal{E}^{\textit{ext}}}^*\times \mathbb{R_{+}} \to {\mathcal{E}^{\textit{ext}}}^*$ represents a  \textbf{\emph{time shift operator}}. That is $\operatorname{S}(A, c)= \{(t+c, e)| (t,e)\in A\}$, for every   $(A, c)\in {\mathcal{E}^{\textit{ext}}}^*\times \mathbb{R_{+}}$. A \textbf{\emph{stationary}} rule $g_f:\mathcal{E}^{\textit{ext}} \to \mathcal{P}(\Theta)$ is such that
\[g_f(\operatorname{S}(A,c))=g_f(A),\]
for $A\in \mathcal{E}^{\textit{ext}}$ and $c\in \mathbb{N}$.
\end{definition}
As a consequence of the result of corollary \ref{cor_AAR}, we have the following representations (see \cite{hamzeowhadi1} Proposition. 1). 

\begin{proposition}\label{time_stationary}
For a (non-degenerate) consistent stationary ranking rule $g_f:\mathcal{{E}^{\textit{ext}}}^* \to \mathcal{P}(\Theta)$, there exist a unique discount factor $q\in (0,\infty)$ and a unique (up to multiplication by a positive number) weight function $w:\mathcal{E}\to \mathbb{R}_{++}$, such that for all $A\in \mathcal{E}^{\textit{ext}}$

\begin{equation}
g_f(A)=\frac{\sum\limits_{(t,e)\in A}q^{t}w(e)g_f(e)}{\sum\limits_{(t,e)\in A} q^{t}w(e)}.
\end{equation}

\end{proposition}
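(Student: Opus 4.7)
The plan is to invoke Corollary~\ref{cor_AAR} on the extended expert set $\mathcal{E}^{\textit{ext}}$ to obtain a weight function $W:\mathcal{E}^{\textit{ext}}\to\mathbb{R}_{++}$, and then to use stationarity to force $W$ to split multiplicatively as $W(t,e)=\phi(t)w(e)$, with $\phi$ an exponential.

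First I would apply stationarity to singletons to get $g_f(\{(t,e)\})=g_f(\{(0,e)\})$ for every $(t,e)\in\mathcal{E}^{\textit{ext}}$, so the prior carried by a single expert depends only on its pure characteristic $e$. Under the stated non-degeneracy hypothesis, Corollary~\ref{cor_AAR} supplies a weight function $W:\mathcal{E}^{\textit{ext}}\to\mathbb{R}_{++}$, unique up to a positive multiplicative constant, such that
\begin{equation*}
g_f(A)=\sum_{(t,e)\in A}\frac{W(t,e)}{\sum_{(t',e')\in A}W(t',e')}\,g_f(\{(0,e)\})
\end{equation*}
for every finite $A\subset\mathcal{E}^{\textit{ext}}$.

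Next I would extract the functional equation by specialising to pairs. Pick $(t_1,e_1),(t_2,e_2)$ whose associated priors $g_f(\{(0,e_1)\})$ and $g_f(\{(0,e_2)\})$ are distinct (non-degeneracy of the range of $g_f$ guarantees such a pair exists). For any admissible shift $c$, the representation above applied to $A=\{(t_1,e_1),(t_2,e_2)\}$ and to $\operatorname{S}(A,c)$, combined with the identity $g_f(\operatorname{S}(A,c))=g_f(A)$, produces two convex combinations of the same pair of distinct priors. Matching coefficients then forces
\begin{equation*}
\frac{W(t_1+c,e_1)}{W(t_1,e_1)}=\frac{W(t_2+c,e_2)}{W(t_2,e_2)}.
\end{equation*}
Hence $\phi(c):=W(t+c,e)/W(t,e)$ depends neither on $t$ nor on $e$; setting $w(e):=W(0,e)$ yields $W(t,e)=\phi(t)w(e)$. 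Composing shifts gives the multiplicative Cauchy equation $\phi(t+s)=\phi(t)\phi(s)$ with $\phi>0$, whose solutions, under the implicit regularity of the model, are exactly $\phi(t)=q^{t}$ for some $q\in(0,\infty)$. Substituting this factorisation into the representation gives the claimed formula.

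Finally, uniqueness of $q$ and of $w$ up to a positive multiplicative constant is inherited from the uniqueness clause of Corollary~\ref{cor_AAR}: if $q^{t}w(e)$ and $\tilde q^{t}\tilde w(e)$ both implement the same rule, they must coincide up to an overall positive factor $\alpha$, which at $t=0$ forces $\tilde w=\alpha w$ and then, by taking ratios at two distinct values of $t$, forces $\tilde q=q$. The main obstacle is the matching-of-coefficients step, where one has to know that non-degeneracy indeed furnishes two pure characteristics with distinct singleton priors; once that is in hand, the passage from the Cauchy equation to an exponential is classical.
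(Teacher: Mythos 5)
Your derivation is essentially the paper's own: the paper gives no proof of Proposition~\ref{time_stationary} beyond declaring it a consequence of Corollary~\ref{cor_AAR} (deferring details to the cited companion paper), and the route you take---apply Corollary~\ref{cor_AAR} on $\mathcal{E}^{\textit{ext}}$, use stationarity on singletons and on two-element sets with distinct priors to match barycentric coordinates and factor the weight as $W(t,e)=\phi(t)w(e)$, then solve the multiplicative Cauchy equation---is exactly the intended argument, and your uniqueness step is fine. Two remarks. First, a small patch: your matching-of-coefficients argument shows $W(t+c,e)/W(t,e)$ is shift-invariant only for experts belonging to a pair with \emph{distinct} singleton priors; to get independence of $(t,e)$ for every expert you should chain through a third expert whose prior differs (non-degeneracy supplies one for each $e$, since if all singleton priors coincided the range of $g_f$ would be a point). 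Second, the one genuine soft spot is the step $\phi(t+s)=\phi(t)\phi(s)\Rightarrow\phi(t)=q^{t}$: with timestamps in $\mathbb{R}_{+}$ and no continuity, monotonicity, or measurability assumption on $g_f$ in $t$, the Cauchy equation admits pathological solutions, so the ``implicit regularity'' you invoke must be made an explicit hypothesis (or time taken discrete, in which case $q=\phi(1)$ and induction suffices). Relatedly, the paper's stationarity definition literally quantifies the shift over $c\in\mathbb{N}$ while timestamps range over $\mathbb{R}_{+}$; read literally this pins down $\phi$ only on the integers and leaves $W(\cdot,e)$ arbitrary on $[0,1)$, so your reading of shifts by arbitrary $c\in\mathbb{R}_{+}$ is the charitable (and surely intended) one. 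These caveats afflict the paper's statement as much as your proof.
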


As a consequence of the representation, under the assumption of the proposition, the weight over a received expert $(t,e)\in A$ can be separated into two separate factors. One is the intrinsic value of her characteristics, captured by $w(e)$. The other one is the role of timing, captured by $q^{t}$. Moreover, the only discounting that captures the role of the timing is the exponential form.
If $q=1$, the timing is not important. Hence, the rule only considers the intrinsic value of each expert. However, when $q\neq 1$, the rule places relatively more ($q\in (0,1)$) or less ($q\in (1,\infty)$) weight on the experts closer to the time of the prediction.

\subsection{Social Choice Functions and Voting Mechanisms}
In this example, we consider the connection of our setup to the social choice literature. In the step three of sec \ref{sec_main}, we interpret the rule $g_f:\mathcal{E}^*\to\mathcal{P}(\Theta)$ as a ranking over the set of risk profiles, $\mathcal{S}$. In the end, the decision-maker should select a rule that minimizes her loss with respect to the induced ranking.

One special case is when we interpret each expert to define its own ranking over the risk set. In other words, we can assume that $\mathcal{E}=\mathcal{P}(\Theta)$. In this case, each expert prefers the decision rules that are better suited for their own ranking. The role of the rule $g_f:\mathcal{E}^*\to\mathcal{P}(\Theta)$ is to attach a social ranking of the set of decision rules. Moreover, the role of the rule $f:\mathcal{E}^*\to\Delta(\Theta)$ is to select the socially acceptable rule. 

In this setup, the role of the rule $g_f$ is a voting mechanism that, based on the expert's reported preferences, should attach a social ranking over the set of decision rules. In the end, the final rule is the highest-ranked decision rule based on social rank.

In a more adapted setup, we may assume that $g_f(e) = e$ for all $e \in \mathcal{E}$. That is, if there is only one individual in the population, the final ranking should be that person's ranking. The form of the voting mechanism that works in this setup is as follows.

\begin{corollary}
Let $g_f:\mathcal{P}(\Theta)^*\to\mathcal{P}(\Theta)$ be individualistic consistent ranking rule. Then, there exists a weight function $w:\mathcal{E}\to\mathbb{R}_{++}$

\begin{equation}
g_f(A)=\frac{\sum\limits_{e\in A}w(e)e}{\sum\limits_{e\in A} w(e)}.
\end{equation}

\end{corollary}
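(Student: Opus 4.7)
The plan is to derive this corollary as a direct instantiation of Corollary~\ref{cor_AAR}, which already characterizes all non-degenerate consistent ranking rules $g_f:\mathcal{E}^*\to\mathcal{P}(\Theta)$ as normalized weighted averages of the singleton priors $g_f(e_i)$. Since that theorem has done the bulk of the work, the remaining task is to specialize the representation to the individualistic setting $\mathcal{E}=\mathcal{P}(\Theta)$ with $g_f(e)=e$, and to check that the non-degeneracy hypothesis is automatically satisfied in this setup.

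First, I would verify the hypotheses of Corollary~\ref{cor_AAR}. Consistency is assumed. For non-degeneracy, the individualistic condition $g_f(e)=e$ forces the image of $g_f$ on singletons to cover all of $\mathcal{P}(\Theta)$. Provided $\Theta$ supports at least three distinguishable points (the only case in which the statement is non-trivial, since otherwise $\mathcal{P}(\Theta)$ is at most one-dimensional and the formula is degenerate), the image is not contained in any one-dimensional linear variety, and the hypothesis of Corollary~\ref{cor_AAR} applies. This yields a weight function $w:\mathcal{E}\to\mathbb{R}_{++}$, unique up to a positive multiplicative constant, satisfying
\begin{equation*}
g_f(A)=\sum_{e_i\in A}\left(\frac{w(e_i)}{\sum_{e_j\in A}w(e_j)}\right)g_f(e_i)
\end{equation*}
for every finite $A\in\mathcal{E}^*$.

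Second, I would substitute the individualistic identity $g_f(e_i)=e_i$ directly into this representation, replacing each occurrence of $g_f(e_i)$ by $e_i$. This produces the desired formula
\begin{equation*}
g_f(A)=\frac{\sum_{e\in A}w(e)\,e}{\sum_{e\in A}w(e)},
\end{equation*}
where the weighted sum is interpreted as a convex combination of probability measures in $\mathcal{P}(\Theta)$.

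The only real obstacle is articulating the non-degeneracy check. Since the representation theorem of Corollary~\ref{cor_AAR} excludes rules whose range lies in a one-dimensional linear variety, one must point out that the individualistic hypothesis forces the range of $g_f$ (restricted to singletons alone) to be the entire simplex $\mathcal{P}(\Theta)$, which is never contained in a line as soon as $|\Theta|\geq 3$. Apart from this observation, the argument reduces to a one-line substitution, and uniqueness of $w$ (up to a positive scalar) is inherited verbatim from Corollary~\ref{cor_AAR}.
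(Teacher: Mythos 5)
Your proposal is correct and follows exactly the route the paper intends: the corollary is a direct instantiation of Corollary~\ref{cor_AAR} with the individualistic identity $g_f(e)=e$ substituted into the weighted-average representation. Your explicit check that the individualistic condition forces the range of $g_f$ to be all of $\mathcal{P}(\Theta)$ (hence not contained in a one-dimensional linear variety once $\Theta$ has at least three points) is a worthwhile addition, since the paper's statement of the corollary silently omits the non-degeneracy hypothesis that Corollary~\ref{cor_AAR} requires.
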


As a result, the mechanism is as follows.

\begin{enumerate}
    \item select the weight of each individual,
    \item ask individuals to report their ranking,
    \item form the social ranking by randomized ranking or weighted average of individual's ranking (depends on the application),
    \item reports the best alternative for the social ranking.
\end{enumerate}
Adding other assumptions, like anonymity and different weak versions of independence axiom can explain the relative utilitarianism and other social choice functional forms.

\section{Related Literature}
Our paper is an application of \cite{hamzeowhadi1} in the setting of Wald's statistical decision theory  \cite{Wald:1950}.

The first two steps of our paper (section \ref{step1} and \ref{step2}) rely on the notion of admissibility and the complete class theorem. A simple form of the complete class theorem and admissible rules appears in \cite{lehman:1947} and \cite{Wald:1950}. \cite{Lacam:1955}, \cite{Brown:1971, Brown:1986} provide more general versions of the result. A good reference for our environment of decision making is \cite{lehman_casella} and \cite{berger2013statistical}. 

The complete class theorem connects the notion of admissible  models of frequentist statistics to priors in Bayesian statistics. The result of \cite{owhadi2015brittleness} shows that the posterior can be highly sensitive to the choice of the prior. There are many interesting approaches for selecting a good prior for computation due to the complete class theorem.   One approach is to select a minimax rule, which is the best response to the prior with the most risk for the decision-maker (least favorable prior). \cite{stark1,stark2} provide a computational approach for selecting a near-optimal prior. \cite{OwhadiScovelMachineWald} has a roadmap on how to combine the complexity of the computation, robustness, and the accuracy of the prediction as a general way of using the minimax criteria. A recent approach is to select a prior from a smaller set of feasible priors that make sense for the modeler as in \cite{uq4}. \cite{stark:2015} is a good reference for the connections and shortcomings of the frequentist and Bayesian statistics.

Regarding the third step (section \ref{step3}), \cite{VonNeumann:1944} provides the behavioral justification of ranking the risk set linearly. There are many papers on more general versions and shortcomings of linear preferences, from Choquet expected utility theory of \cite{schmeidler:1989} to prospect theory of \cite{kahneman_tversky:1979}. \cite{gilboa_monographs} is a great reference for this regard. 

In the fourth step (section \ref{step4}), the notion of consistency is a particular case of the one that appears in \cite{hamzeowhadi1}. Another general approach is through the theory of Cased-Based Prediction developed by the seminal works of \cite{gilboa1,gilboa_inductive, gilboa_book} and \cite{billot1}. 

There are many similar notions of consistency in different environments of economics and statistics. In the context of social choice, \cite{shapley2}, \cite{dhillon}, and \cite{shapley1} study variants of extended Pareto rules. 
In the context of choice theory, \cite{ahn} studied a similar form of rules. The path independence choice functions are extensively studied by \cite{plott}. \cite{hamzeowhadi1} has a complete list of the relevant topics. 

Kernel Smoothing and generally smoothing methods have a long history in statistics. A good introduction of the topic is chapter four of \cite{wasserman:nonparametric}. For a more in-depth study see \cite{scott:1992} and \cite{ruppert:2003}. For the axiomatic approaches, \cite{gilboa_inductive} is a great reference. 

Finally, our goal is only to provide a logical foundation for the hierarchical Bayes method. Many papers are discussing the method and applications in different fields of sciences. By searching the Hierarchical Bayes method in google scholar, one can find more than 100,000 scholarly articles on the topic!

\subsection*{Acknowledgments}
The authors gratefully acknowledge support
from Beyond Limits (Learning Optimal Models) through CAST (The Caltech Center for Autonomous Systems and Technologies) and partial support
from the Air Force Office of Scientific Research under awards number FA9550-18-1-0271 (Games for Computation and Learning) and FA9550-20-1-0358 (Machine Learning and Physics-Based Modeling and Simulation).


\section*{References}

\printbibliography[heading=none]



\end{document}